\documentclass[runningheads,envcountsame]{llncs}
\usepackage[T1]{fontenc}
\usepackage{graphicx}
\usepackage{hyperref}
\usepackage{color}

\usepackage{booktabs}
\usepackage{amssymb}
\usepackage{amsmath}
\usepackage[capitalize]{cleveref}
\usepackage{enumerate}
\usepackage{ebproof}
\usepackage{stmaryrd}
\usepackage{xcolor}
\usepackage{url}
\usepackage{xspace}
\usepackage{todonotes}
\usepackage{cancel}
\usepackage{bbding}

\renewcommand{\emptyset}{\varnothing}

\newcommand{\m}[1]{\mathsf{#1}}

\newcommand{\mb}[1]{\mathsf{#1}}
\newcommand{\seq}[2][n]{{#2_1},\dots,{#2_{#1}}}
\renewcommand{\vec}{\boldsymbol}

\newenvironment{psmallmatrix}
 {\left(\begin{smallmatrix}}
 {\end{smallmatrix}\right)}

\newcommand{\NN}{\mathbb{N}}
\newcommand{\ZZ}{\mathbb{Z}}
\renewcommand{\AA}{\mathcal{A}}
\newcommand{\BB}{\mathcal{B}}
\newcommand{\CC}{\mathcal{C}}
\newcommand{\DD}{\mathcal{D}}
\newcommand{\FF}{\mathcal{F}}
\newcommand{\HH}{\mathcal{H}}

\newcommand{\MM}{\mathcal{M}}
\newcommand{\OO}{\mathcal{O}}
\newcommand{\PP}{\mathcal{P}}
\newcommand{\RR}{\mathcal{R}}
\renewcommand{\SS}{\mathcal{S}}
\newcommand{\TT}{\mathcal{T}}
\newcommand{\VV}{\mathcal{V}}

\newcommand{\rt}{\mathrm{root}}

\newcommand{\rto}{\xrightarrow{\smash{\epsilon}}}

\newcommand{\superterm}{\trianglerighteqslant}
\newcommand{\prsuperterm}{\rhd}
\newcommand{\nprsuperterm}{\ntriangleright}

\newcommand{\sqsupsetsim}{\mathrel{\vphantom{\gtrsim}\smash{\ooalign{\raise.4ex\hbox{$\sqsupset$}\cr$\raise-.85ex\hbox{$\sim$}$}}}}

\newcommand{\lex}{\mathsf{lex}}
\newcommand{\kbo}{\mathsf{kbo}}

\newcommand{\rpo}{\mathsf{rpo}}
\newcommand{\wpo}{\mathsf{wpo}}

\newcommand{\DP}{\mathsf{DP}}
\newcommand{\Emb}{{\mathcal{E}\mathsf{mb}}}

\newcommand{\emptylist}{[\,]}

\newcommand{\NaTT}{\textsf{NaTT}}

\begin{document}

\title{Lexicographic Combination of Reduction Pairs (Extended Version)}

\author{Teppei Saito\textsuperscript{(\Envelope)}\orcidID{0009-0001-9786-0044}\and
Nao Hirokawa\textsuperscript{(\Envelope)}\orcidID{0000-0002-8499-0501}}

\authorrunning{T. Saito and N. Hirokawa}

\institute{JAIST, Nomi, Japan
\email{\{saito,hirokawa\}@jaist.ac.jp}}

\maketitle

\begin{abstract}
We present a simple criterion for combining reduction pairs
lexicographically. The criterion is applicable to arbitrary classes
of reduction pairs,
such as the polynomial interpretation, the matrix interpretation, and the
Knuth--Bendix order. In addition, we investigate a variant of the matrix
interpretation where the lexicographic order is employed instead of the usual
component-wise order.
Effectiveness is demonstrated by experiments and examples,
including Touzet's Hydra Battle.
\keywords{Term rewriting  \and Termination \and Dependency pairs.}
\end{abstract}

\section{Introduction}

\emph{Lexicographic combination} is a powerful method to combine termination
measures into a more complex one. To illustrate it, consider the term rewrite
system
\begin{align*}
1\colon\quad \m{f}(\m{f}(x)) & \to \m{g}(\m{g}(\m{f}(x)))
&
2\colon\quad \m{g}(\m{g}(x)) & \to x
\end{align*}
which has the following rewrite sequence:
\[
\m{f}(\m{f}(\m{f}(x)))
\xrightarrow{1}
\m{f}(\m{g}(\m{g}(\m{f}(x))))
\xrightarrow{2}
\m{f}(\m{f}(x))
\xrightarrow{1}
\m{g}(\m{g}(\m{f}(x)))
\xrightarrow{2}
\m{f}(x)
\]
If the numbers of occurrences of $\m{f}$ and $\m{g}$ are measured,
the sequence turns into the descending sequence with respect to the
lexicographic order $>^\lex$.
\[
(3,0) >^\lex
(2,2) >^\lex
(2,0) >^\lex
(1,2) >^\lex
(1,0)
\]
In this manner, lexicographic combination can be used for showing
termination of the rewrite system.
More formally speaking, the measure can be expressed as the lexicographic
combination of two linear polynomial interpretations~\cite{L79}, namely $\AA$ and $\BB$
defined by
$\m{f}_\AA(x) = x + 1$, $\m{g}_\AA(x) = x$,
$\m{f}_\BB(x) = x$, and $\m{g}_\BB(x) = x + 1$. 
Here $\AA$ counts the number of occurrences of $\m{f}$, while $\BB$ counts that of $\m{g}$.

There has been a long line of research concerning termination analysis of
term rewrite systems.
Among others, the dependency pair
framework~\cite{AG00,GAO02,GTS05,HM05,HM07} is a powerful method
for automated termination analysis.  Decreasing
measures for the method are typically given in the form of 
\emph{reduction pairs} consisting of preorders and well-founded orders on terms.
Unfortunately, in general,
lexicographic combinations of reduction pairs are not reduction
pairs. To overcome this, we give a simple criterion for a
combination to be a reduction pair.

Actually, the previous example implicitly uses the folklore that
lexicographic combinations of \emph{monotone} reduction pairs are
(monotone) reduction pairs, see \cite{BL87,G90,TGS04}.
Our main result can be conceived as an extension of it.
For example, let us consider the following term rewrite system
\begin{align*}
\m{s}(x) + y & \to \m{p}(\m{s}(x)) + \m{s}(y)
&
\m{p}(\m{s}(x)) & \to x
\end{align*}
which encodes addition in a tricky way, using the successor symbol $\m{s}$ and the predecessor symbol $\m{p}$.
The dependency pair framework tells us that the termination is established
if one can find a reduction pair $({\geqslant}, {>})$ that fulfills the
following set of constraints.
\begin{align*}
\m{s}(x) + y & \geqslant \m{p}(\m{s}(x)) + \m{s}(y)
&
\m{p}(\m{s}(x)) & \geqslant x
\\
\m{s}(x) +^\sharp y & > \m{p}(\m{s}(x)) +^\sharp \m{s}(y)
&
\m{s}(x) +^\sharp y & > \m{p}^\sharp(\m{s}(x))
\end{align*}
Here, $+^\sharp$ and $\m{p}^\sharp$ are fresh symbols introduced by the method.
To do this, one can use the lexicographic combination of the following linear polynomial interpretations $\AA$ and $\BB$:
\begin{align*}
\m{s}_\AA(x) &= x + 1
&
\m{p}_\AA(x) &= x
&
\m{p}^\sharp_\AA(x) &= 0
&
x +_\AA y &= x
&
x +^\sharp_\AA y &= x
\\
\m{s}_\BB(x) &= x + 1
&
\m{p}_\BB(x) &= 0
&
\m{p}^\sharp_\BB(x) &= 0
&
x +_\BB y &= x
& 
x +^\sharp_\BB y &= x
\end{align*}
The resulting lexicographic combination satisfies the 
constraints: Thanks to lexicographic comparison, the constraint
$\m{p}(\m{s}(x)) \geqslant x$ is not subject to comparison by the second
interpretation $\BB$, as it is interpreted to 
$x + 1 > x$ in the first interpretation $\AA$.
This allows us to use $\m{p}_\BB(x) = 0$,
which helps to satisfy the remaining constraints.
A problem here is that, in contrast to the previous example, the algebras $\AA$ and $\BB$ are not monotone,
as $\m{p}_\BB(x) = 0$ disregards the argument $x$.
Although the folklore does not apply,
our result can justify that the resulting combination is indeed a reduction pair,
which facilitates a successful termination proof.

The polynomial interpretation~\cite{L79}, the matrix
interpretation~\cite{EWZ08},
and the Knuth--Bendix order~\cite{KB70}
are typical methods for constructing reduction pairs.
We also show how to use these methods with lexicographic combination,
demonstrating it with the term rewrite system of the Battle of Hercules and
Hydra~\cite{KP82} due to Touzet~\cite{T98}.
As a by-product,
we obtain a variant of the matrix interpretation with the lexicographic order
instead of the standard component-wise order.
As remarked in~\cite{EWZ08,NM11}, this has been an open question.

In the context of the dependency pair framework, repetitive application of
reduction pair processors~\cite{GTS05,HM05} and the rule removal
method~\cite{TGS04} can be regarded as alternative methods for combining
reduction pairs in a lexicographic manner.  Examples and experimental data
show that our method is complementary to those methods
and particularly useful for relative termination.

This paper is an extended version of \cite{SH25} with appendices.

\paragraph{The structure of the paper.}
After recalling some preliminaries in \cref{sec:preliminaries}, a criterion
for lexicographic combination is presented
in \cref{sec:criterion}.  The termination proof of the Hydra Battle is discussed
in \cref{sec:hydra}.
Then, as a generalization of lexicographic combination of the linear polynomial
interpretation, \cref{sec:matrix} studies a variant of the matrix interpretation
with the lexicographic order.
Experimental data and related work are discussed in \cref{sec:experiments}
and \cref{sec:conclusion}, respectively.

\section{Preliminaries}
\label{sec:preliminaries}

Throughout the paper, we assume familiarity with term
rewriting~\cite{BN98,TeReSe}.  In this section, we briefly recall
notions and notations for term rewriting and termination analysis based
on the dependency pair framework.

Let $\FF$ be a signature and $\VV$ an infinite set of variables.  The
set of terms built from $\FF$ and $\VV$ is denoted by $\TT(\FF, \VV)$, or
simply by $\TT$.  
The root symbol $\rt(t)$ of a non-variable term $t = f(\seq{t})$ is $f$.
A mapping $\sigma : \VV \to \TT$ is called a
\emph{substitution} if $\sigma(x) \neq x$ for only finitely many variables
$x$.  Given a term $t$, the term obtained by replacing each variable
occurrence $x$ in $t$ with $\sigma(x)$ is denoted by $t\sigma$.  
Let $\square$ be a fresh constant symbol. A
\emph{context} is a term with exactly one occurrence of $\square$.
The term obtained by replacing $\square$ in a context $C$ with a term
$t$ is denoted by $C[t]$.  We write $s \superterm t$ if $s = C[t]$ for some
context $C$, and moreover we write $s \prsuperterm t$ if $s \superterm t$
and $s \neq t$.  Let $\hookrightarrow$ be a relation on terms.  The
relation $\hookrightarrow$ is said to be \emph{closed under substitutions}
if $s\sigma \hookrightarrow t\sigma$ whenever $s \hookrightarrow t$ and
$\sigma$ is a substitution.  Similarly, $\hookrightarrow$ is said to be
\emph{closed under contexts} if $C[s] \hookrightarrow C[t]$ whenever $t
\hookrightarrow u$ and $C$ is a context.

A \emph{rewrite rule} $\ell \to r$ is a pair $(\ell, r)$ of terms such that
$\ell$ is not a variable and every variable occurring in $r$ also occurs in
$\ell$.  A set of rewrite rules is called a \emph{term rewrite system}
(TRS).  The \emph{rewrite step} $\to_\RR$ of a TRS $\RR$ is defined as
follows: $s \to_\RR t$ if $s = C[\ell\sigma]$ and $t = C[r\sigma]$ for
some rewrite rule $\ell \to r \in \RR$, context $C$, and substitution $\sigma$.
When $C = \Box$, it is written as $s \rto_\RR t$.
Given a term $t$ and a variable $x$, we write $|t|_x$ for the number
of occurrences of $x$ in $t$.  A term rewrite system $\RR$ is
\emph{non-duplicating} if $|\ell|_x \geqslant |r|_x$ for all 
$\ell \to r \in \RR$ and variables $x$.
A term $t$ is \emph{terminating} with respect to a relation
$\hookrightarrow$ on terms if there is no infinite sequence
$t \hookrightarrow t_1 \hookrightarrow t_2 \hookrightarrow \cdots$
starting from $t$.
\emph{Termination} of the relation $\hookrightarrow$ is defined as absence
of non-terminating terms.
A TRS $\RR$ is \emph{terminating} if $\to_\RR$ is.  
Let $\hookrightarrow^*$ denote the reflexive and transitive closure of
a relation $\hookrightarrow$.
Given TRSs $\RR$ and $\SS$, the relation 
$\to_{\RR/\SS}$ is defined on terms as follows: $s \to_{\RR/\SS} t$ if
$s \to_\SS^* u \to_\RR v \to_\SS^* t$ for some terms $u$ and $v$.
We say that $\RR$ is \emph{relatively terminating} with respect to $\SS$,
or $\RR/\SS$ is terminating, if $\to_{\RR/\SS}$ is terminating.

We recall the dependency pair
framework~\cite{AG00,GTS05,HM05,HM07}.
Let $\RR$ be a TRS.  We define the set $\DD_\RR$ of \emph{defined symbols} as 
$\{ f \mid f(\seq{\ell}) \to r \in \RR \}$.
Given a term $t$ of the form $f(\seq[n]{t})$
with $f \in \DD_\RR$, we write $t^\sharp$ for $f^\sharp(\seq[n]{t})$.
Here $f^\sharp$ is a fresh $n$-ary function symbol corresponding to $f$.  
The set of such terms $t^\sharp$ is denoted by $\TT^\sharp$.
The TRS $\DP(\RR)$ is defined as follows:
\[
\DP(\RR) =
\{ \ell^\sharp \to t^\sharp \mid \text{$\ell \to r \in \RR$, 
$r \superterm t$, $\rt(t) \in \DD_\RR$, and $\ell \nprsuperterm t$} \}
\]
Rules in $\DP(\RR)$ are called \emph{dependency pairs}.
\emph{Dependency pair problems} are pairs $(\PP,\RR)$ of TRSs with
$\PP \subseteq \TT^\sharp \times \TT^\sharp$ and
$\RR \subseteq \TT \times \TT$.
A dependency pair problem $(\PP,\RR)$ is \emph{finite} if
there exists no infinite sequence of the form
$s_1 \to_\RR^* t_1 \rto_\PP s_2 \to_\RR^* t_2 \rto_\PP \cdots$,
where each $s_i$ is terminating with respect to $\RR$.

\begin{theorem}
\label{thm:dp}
A TRS $\RR$ is terminating if and only if $(\DP(\RR),\RR)$ is finite.
\end{theorem}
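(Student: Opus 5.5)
We prove the two directions separately, following the standard argument of Arts and Giesl. For the ``only if'' direction, assume $\RR$ is terminating and suppose, for contradiction, that there is an infinite chain
$s_1 \to_\RR^* t_1 \rto_{\DP(\RR)} s_2 \to_\RR^* t_2 \rto_{\DP(\RR)} \cdots$.
Erasing sharps, written $u \mapsto u^\flat$, we map each $\to_\RR$ step between sharped terms to an $\to_\RR$ step between unsharped terms. This works because rules of $\RR$ contain no sharped symbols, so a rewrite inside $f^\sharp(\dots)$ happens strictly below the root. Each dependency pair step $\ell^\sharp\sigma \rto t^\sharp\sigma$ arises from a rule $\ell \to r$ with $r \superterm t$. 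Hence $\ell\sigma \to_\RR r\sigma \superterm t\sigma$, that is, $t_i^\flat \to_\RR \cdot \superterm s_{i+1}^\flat$. Combining these gives an infinite sequence of $\to_\RR \cdot \superterm$ steps starting from $s_1^\flat$, in which every segment contains at least one $\to_\RR$ step. Since $\to_\RR$ is closed under contexts, a rewrite step on a subterm lifts to the enclosing term. So the sequence lifts to an infinite $\to_\RR$ rewrite sequence from $s_1^\flat$, contradicting termination of $\RR$. (Alternatively, we can simply invoke the hypothesis that each $s_i$ is $\RR$-terminating.)

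For the ``if'' direction, assume $\RR$ is not terminating; we build an infinite chain. The key notion is a \emph{minimal non-terminating term}: a non-terminating term all of whose proper subterms are terminating. Any non-terminating term has such a subterm, obtained by taking a non-terminating subterm of minimal size. Moreover, the root of a minimal non-terminating term $u$ is a defined symbol, and $u$ is not a variable. Pick a minimal non-terminating term $u_1$.

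The central step is the following claim, which we expect to be the main obstacle: from a minimal non-terminating term $u$ there is a minimal non-terminating $u'$ with $u^\sharp \to_\RR^* \cdot \rto_{\DP(\RR)} u'^\sharp$. We argue as follows. Consider an infinite reduction from $u$. Since all arguments of $u$ are terminating, the reduction cannot stay below the root forever: an infinite reduction confined to the arguments would yield an infinite reduction of some argument, by a pigeonhole argument over the finitely many argument positions. So there is a root step $u \to_{>\epsilon}^* \ell\sigma \rto r\sigma$ with $r\sigma$ non-terminating. The arguments of $\ell\sigma$ are reducts of terminating terms, hence terminating. Therefore every $\sigma(x)$ for $x$ in $\ell$ is terminating, since each is a subterm of some argument of $\ell\sigma$ (using that $\ell$ is not a variable).

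Now choose a minimal non-terminating subterm $t'$ of $r\sigma$. It cannot lie inside some $\sigma(x)$, since those are terminating. So $t' = t\sigma$ for a non-variable subterm $t$ of $r$ with $\rt(t) \in \DD_\RR$, by the minimality remark above. We must also check that $\ell \nprsuperterm t$. If $t$ were a proper subterm of $\ell$, then $t\sigma$ would be a subterm of an argument of $\ell\sigma$ and hence terminating, a contradiction. Thus $\ell^\sharp \to t^\sharp \in \DP(\RR)$. The below-root steps lift to $u^\sharp \to_\RR^* \ell^\sharp\sigma$, since they occur at argument positions, which are preserved by $\sharp$.

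Iterating the claim yields an infinite chain $u_1^\sharp \to_\RR^* \cdot \rto u_2^\sharp \to_\RR^* \cdots$. Each $u_i^\sharp$ is $\RR$-terminating, because its arguments are terminating and no rule applies at a sharped root. This contradicts finiteness of $(\DP(\RR),\RR)$.
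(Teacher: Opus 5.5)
Your proof is correct. It is the standard Arts--Giesl argument: erase marks and lift chains to $\to_\RR \cdot \superterm$ sequences for the only-if direction, and build chains from minimal non-terminating terms for the if direction. The paper itself gives no proof of this theorem; it only recalls it from the dependency pair literature.

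The one slip is the parenthetical ``alternative'' in the only-if direction, which you should drop. The hypothesis that each $s_i$ is $\RR$-terminating cannot replace termination of $\RR$ there. Your infinite sequence starts at $s_1^\flat$, not $s_1$, and erasing marks can create root redexes: for $\RR = \{\m{f}(x) \to \m{f}(x)\}$, the term $\m{f}^\sharp(\m{a})$ is terminating but $\m{f}(\m{a})$ is not.
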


A pair of a preorder $\geqslant$ and a strict order $>$ on the same set is
an \emph{order pair} if $a > d$ whenever $a \geqslant b > c \geqslant d$.
An order pair is \emph{well-founded} if $>$ is well-founded.  An order pair
$({\geqslant},{>})$ on terms is \emph{stable} if $\geqslant$ and $>$ are
closed under substitutions.  A well-founded stable order pair is a
\emph{reduction pair} if $\geqslant$ is closed under contexts.  If in
addition $>$ is closed under contexts, $({\geqslant},{>})$ is called a
\emph{monotone} reduction pair.  The next theorem is known as the reduction
pair processor.

\begin{theorem}
\label{thm:rp}
Let $({\geqslant},{>})$ be a reduction pair.  A dependency pair problem
$(\PP,\RR)$ with $\PP \cup \RR \subseteq {\geqslant}$ is finite if and only
if $(\PP \setminus {>},\RR)$ is finite.
\end{theorem}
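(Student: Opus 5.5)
The "if" direction follows because $\PP \setminus {>} \subseteq \PP$. The "only if" direction is shown by contraposition: from an infinite chain of $(\PP,\RR)$ I would build one of $(\PP \setminus {>},\RR)$.

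For the easy direction, assume $(\PP,\RR)$ is finite. Every sequence $s_1 \to_\RR^* t_1 \rto_{\PP\setminus{>}} s_2 \to_\RR^* \cdots$ is also a sequence of the form $s_1 \to_\RR^* t_1 \rto_\PP s_2 \to_\RR^* \cdots$, with the same side condition that each $s_i$ is terminating with respect to $\RR$. So no infinite chain for $\PP \setminus {>}$ exists, and $(\PP\setminus{>},\RR)$ is finite.

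For the converse, assume $(\PP,\RR)$ is not finite. Fix an infinite chain $s_1 \to_\RR^* t_1 \rto_\PP s_2 \to_\RR^* t_2 \rto_\PP \cdots$ in which each $s_i$ is $\RR$-terminating. The plan has three steps.

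\begin{enumerate}
\item Transfer the chain into the order pair. Since $\RR \subseteq {\geqslant}$ and $\geqslant$ is closed under substitutions and contexts, every step $u \to_\RR v$, which has the form $C[\ell\sigma] \to C[r\sigma]$, gives $u \geqslant v$. By reflexivity and transitivity of the preorder $\geqslant$, we get $s_i \geqslant t_i$. Each root step $t_i \rto_\PP s_{i+1}$ has the form $t_i = \ell\sigma$, $s_{i+1} = r\sigma$ with $\ell \to r \in \PP \subseteq {\geqslant}$. Stability then gives $t_i \geqslant s_{i+1}$, and moreover $t_i > s_{i+1}$ whenever $\ell \to r \in {>}$.
\item Rule out infinitely many strict steps. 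Suppose infinitely many of the root steps used rules in $\PP \cap {>}$. Compose each stretch $s_j \geqslant \cdots > \cdots \geqslant s_k$ using the compatibility condition of order pairs, namely $a \geqslant b > c \geqslant d$ implies $a > d$. Padding with reflexive $\geqslant$-steps where needed, this yields an infinite descending $>$-chain on a subsequence of the $s_i$. That contradicts well-foundedness of $>$.
\item Conclude. Only finitely many root steps use rules of $\PP \cap {>}$, so there is an index $N$ beyond which every root step uses a rule of $\PP \setminus {>}$. The tail $s_N \to_\RR^* t_N \rto_{\PP\setminus{>}} s_{N+1} \to_\RR^* \cdots$ is an infinite chain for $(\PP\setminus{>},\RR)$, and its $s_i$ are still $\RR$-terminating. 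Hence $(\PP\setminus{>},\RR)$ is not finite.
\end{enumerate}

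The main obstacle is a subtlety in how $\PP \setminus {>}$ is read. The set should consist of the rules $\ell \to r$ with $\ell \not> r$. The strict decrease must come from the rule itself being oriented by $>$, applied under the substitution via stability. It is not enough for the particular instance to decrease. Once that reading is fixed, the rest is the routine composition of $\geqslant$ and $>$ in step 2.
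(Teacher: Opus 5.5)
Your proof is correct. The paper gives no proof of this theorem; it quotes it as the known reduction pair processor, and your argument is the standard one: transfer the chain into $\geqslant$ and $>$, use compatibility and well-foundedness to allow only finitely many strict root steps, then take the tail.

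One cosmetic slip: your opening sentence swaps the labels. The inclusion $\PP \setminus {>} \subseteq \PP$ gives the \emph{only if} direction, and building an infinite $(\PP\setminus{>},\RR)$-chain from an infinite $(\PP,\RR)$-chain gives the \emph{if} direction. The two arguments you then write out are nonetheless stated and proved correctly, with a witnessing rule fixed for each root step, as you implicitly do.
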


Thus, a TRS $\RR$ is terminating if $\DP(\RR) \subseteq {>}$ and $\RR
\subseteq {\geqslant}$ for some reduction pair $({\geqslant},{>})$.  
This simple criterion can also be used for showing relative termination.
We say that a TRS $\RR$ \emph{dominates} a TRS $\SS$ if $r$ has no defined
symbols of $\RR$ for all rules $\ell \to r \in \SS$.

\begin{theorem}[\textnormal{\cite{INVY17}}]
\label{thm:dp-relative}
Let $\RR$ and $\SS$ be TRSs such that $\RR$ dominates $\SS$ and $\SS$
is non-duplicating.  If there exists a reduction pair $({\geqslant}, {>})$
with $\DP(\RR) \subseteq {>}$ and $\RR \cup \SS \subseteq {\geqslant}$ then
$\RR/\SS$ is terminating.
\end{theorem}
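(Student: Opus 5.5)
The plan is to argue by contradiction and follow the usual minimal-counterexample construction of the dependency pair method, adapted to the relative setting.

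\textbf{Step 1: minimal non-terminating terms.} Suppose $\to_{\RR/\SS}$ is not terminating. Call a term \emph{non-terminating} if it admits an infinite $\to_{\RR\cup\SS}$-sequence with infinitely many $\RR$-steps. Among such terms, pick a \emph{minimal} one, i.e.\ one all of whose proper subterms are terminating in this sense.

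\textbf{Step 2: a minimal term is an $\RR$-redex chain start.} I will show that a minimal term $t$ is rooted by a symbol of $\DD_\RR$. I will also show that some infinite sequence from $t$ performs an $\RR$-step at the root: $t \to^*_{\RR\cup\SS} \ell\sigma \rto_\RR r\sigma$ with all steps before it below the root, and $r\sigma$ again non-terminating.

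This is the main obstacle, because $\SS$-steps may occur at the root. The argument I intend uses the two hypotheses as follows.
\begin{itemize}
\item \emph{Domination.} If an $\SS$-rule $\ell'\to r'$ fires at a position, the part of $r'\sigma'$ coming from $r'$ contains no symbol of $\DD_\RR$. Hence no $\RR$-redex can ever be rooted in the ``upper part''; all later $\RR$-steps happen inside instances $\sigma'(x)$ or inside subterms built from them.
\item \emph{Non-duplication.} Track the multiset $M$ of maximal subterms rooted by symbols of $\DD_\RR$ (or variables) hanging below the upper part. An $\SS$-step at the upper part only permutes or erases elements of $M$, since $|\ell'|_x\geqslant|r'|_x$. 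An $\RR$-step or $\SS$-step inside an element rewrites that element.
\end{itemize}
So if $t$ itself never receives a root $\RR$-step, the infinitely many $\RR$-steps are distributed over the descendants of finitely many original proper subterms of $t$. A K\"onig-style argument then gives one proper subterm with infinitely many $\RR$-steps, contradicting minimality. The same argument shows the root of $t$ must be in $\DD_\RR$.

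\textbf{Step 3: extracting a dependency pair.} From $r\sigma$ non-terminating, choose a minimal non-terminating subterm. The subterm cannot lie inside some $\sigma(x)$, because $\sigma(x)$ is a proper subterm of the left-hand side of the root step, which descends by internal steps from a minimal term. Hence it is $u\sigma$ with $r \superterm u$, $\rt(u)\in\DD_\RR$, and $\ell \nprsuperterm u$. This gives $\ell^\sharp\sigma \rto_{\DP(\RR)} u^\sharp\sigma$.

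\textbf{Step 4: the descending chain.} Iterating Steps 2 and 3 yields an infinite sequence $t_1^\sharp \to^* s_1^\sharp \rto_{\DP(\RR)} t_2^\sharp \to^* s_2^\sharp \cdots$. The $\to^*$ parts consist of $\RR\cup\SS$-steps strictly below the root. Since $\RR\cup\SS\subseteq{\geqslant}$ and $\geqslant$ is closed under substitutions and contexts, we get $t_i^\sharp\geqslant s_i^\sharp$. Since $\DP(\RR)\subseteq{>}$ and $>$ is closed under substitutions, we get $s_i^\sharp > t_{i+1}^\sharp$. Compatibility of the order pair then gives an infinite $>$-chain, contradicting well-foundedness. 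Hence $\RR/\SS$ is terminating.
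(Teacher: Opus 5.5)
The paper does not prove this theorem; it quotes it from \cite{INVY17}, so I judged your argument on its own. Your architecture is the standard one: minimal non-terminating terms, extraction of a $\DP(\RR)$-chain whose connecting steps are $\RR\cup\SS$-steps below the root, and a final contradiction via compatibility and well-foundedness. Steps 1, 3 and 4 are fine. The gap is in Step 2, exactly where the two hypotheses are used.

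Your claim that an $\SS$-step in the upper part only permutes or erases elements of $M$ is false. Domination restricts only the right-hand sides of $\SS$, so left-hand sides may contain symbols of $\DD_\RR$. Take $\m{g} \in \DD_\RR$ and the non-duplicating, dominated rule $\m{h}(\m{g}(x,y)) \to \m{k}(x,y)$. The step $\m{h}(\m{g}(u_1,u_2)) \to_\SS \m{k}(u_1,u_2)$ destroys the element $\m{g}(u_1,u_2)$ and releases two of its proper subterms as new elements. Likewise, an $\RR$-step can give an element a non-defined root (as in $\m{half}(\m{s}(\m{s}(x))) \to \m{s}(\m{half}(x))$), after which upper-part $\SS$-steps can match into it. So the descendants of an original subterm $u_i$ branch rather than forming a single line. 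Pigeonhole tells you that some $u_i$ has infinitely many $\RR$-steps among its descendants, but that does not yet give an infinite sequence from $u_i$ with infinitely many $\RR$-steps. K\"onig's lemma alone does not guarantee that a single branch carries infinitely many of them either. In your single-line picture this inference was automatic; in reality it is not.

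The missing piece is the invariant through which non-duplication actually acts.
\begin{itemize}
\item Write the current term as $C[w_1,\ldots,w_m]$, where the multi-hole context $C$ contains no symbol of $\DD_\RR$, and assign each $w_l$ to one original $u_i$.
\item For each $i$, maintain a reduct $u_i \to^*_{\RR\cup\SS} u_i'$ in which the $w_l$ assigned to $u_i$ occur at pairwise parallel positions, and in which every step so far performed inside those $w_l$ is replayed.
\item A step inside some $w_l$ is replayed at the corresponding position of $u_i'$.
\item For an $\SS$-step with rule $\ell' \to r'$ at a position of $C$, fix, for each variable $x$, an injection from the occurrences of $x$ in $r'$ into its occurrences in $\ell'$; this exists since $|\ell'|_x \geqslant |r'|_x$. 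Each new hole is then either a whole old hole or an instance $\sigma'(x)$ at a distinct variable occurrence lying inside an old hole, so parallelism is preserved.
\item Since $r'$ has no symbol of $\DD_\RR$, the new context is again free of them, so every $\RR$-step lies inside some hole.
\end{itemize}
Pigeonhole over the finitely many $u_i$ now yields some $u_i$ whose replay contains infinitely many $\RR$-steps.

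This proves the lemma you actually need: $C[u_1,\ldots,u_k]$ is terminating whenever $C$ has no symbol of $\DD_\RR$ and all $u_i$ are terminating. Apply it to $t$ itself when $\rt(t) \notin \DD_\RR$. When the first root step is an $\SS$-step, apply it to that step's result $r'\sigma'$, whose holes lie in reducts of arguments of $t$. The split should be on whether the first root step is an $\RR$-step, not on whether a root $\RR$-step ever occurs. With this, Step 2 holds as you stated it, and the rest of your proof goes through.
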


Let $\FF$ be a signature. An $\FF$-\emph{algebra} $\AA$ is a pair
$(A, {\{f_\AA\}_{f \in \FF}})$ where $A$ is a non-empty set,
called the \emph{carrier} of $\AA$, and each $f_\AA$ is an $n$-ary function
on $A$, called the interpretation of an $n$-ary function symbol $f$.  
Let $\AA$ be an $\FF$-algebra.  A function $\alpha$ from $\VV$ to $A$ is
called an \emph{assignment} for $\AA$. It is extended to the homomorphism 
$[\alpha]_\AA : \TT(\FF,\VV) \to A$
as follows:
\[
[\alpha]_\AA(x) =
\begin{cases}
\alpha(x) & \text{if $x$ is a variable} \\
f_\AA([\alpha]_\AA(t_1), \ldots, [\alpha]_\AA(t_n))
& \text{if $t = f(\seq{t})$} 
\end{cases}
\]
Assume that $\AA$ is equipped with an order pair $({\geqslant},{>})$ on
$A$.  The algebra $\AA$ is \emph{well-founded} if $>$ is well-founded, and
\emph{weakly monotone} if 
\(
f_\AA(a_1, \ldots, a_i, \ldots, a_n) \geqslant
f_\AA(a_1, \ldots, b, \ldots, a_n)
\)
whenever $f \in \FF$ and $a_i \geqslant b$.  We write $s \geqslant_\AA t$ and 
$s >_\AA t$ if $[\alpha]_\AA(s) \geqslant [\alpha]_\AA(t)$ and
$[\alpha]_\AA(s) > [\alpha]_\AA(t)$ hold for all assignments $\alpha$,
respectively. It is known that $({\geqslant_\AA},{>_\AA})$ is a reduction
pair if $\AA$ is weakly monotone and well-founded.

The \emph{matrix interpretation}~\cite{EWZ08} provides a semantic method to
construct reduction pairs.  The carrier of a matrix interpretation $\AA$ is
the set of vectors $\vec{x}$ of natural numbers of a fixed dimension 
$d > 0$. Vectors are ordered by the component-wise order pair $(\geqslant,
>)$ defined as follows:
$(x_1, \ldots, x_d)^T \geqslant (y_1, \ldots, y_d)^T$ if $x_i \geqslant y_i$ for all $i \in \{ 1, \ldots, d \}$;
if in addition $x_1 > y_1$ then $(x_1, \ldots, x_d)^T > (y_1, \ldots, y_d)^T$.
We write 
$(A)_{i,j}$ or simply $A_{i,j}$ for the entry at the $i$-th row and the $j$-th column.
Moreover, $\vec{e}_i$ stands for the unit vector $(0, \ldots, 1, \ldots, 0)^T$
having $1$ only at the $i$-th coordinate, and
$O$ (resp. $\vec{0}$) for the zero matrix (resp. vector) with $0$ in all entries.
The interpretation $f_\AA$ of each $n$-ary function symbol $f$ is of the form
\[
f_\AA(\vec{x}_1, \ldots, \vec{x}_n) = A_1 \vec{x}_1 + \ldots + A_n \vec{x}_n + \vec{a}
\]
where $A_1, \ldots, A_n$ are $d \times d$ matrices of natural numbers and $\vec{a} \in \NN^d$.
The pair $({\geqslant_\AA}, {>_\AA})$ is a reduction pair.
The special class of the matrix interpretation with $d = 1$ is called the \emph{linear polynomial interpretation}.

The \emph{Knuth--Bendix order} (KBO)~\cite{KB70} is another way to
construct reduction pairs. A \emph{weight function} is a pair
of a positive integer $w_0$ and a function $w$ from function symbols to
natural numbers
with $w(c) \geqslant w_0$ for all constants $c$.
The \emph{weight} $w(t)$ of a term $t$ is defined inductively:
$w(x) = w_0$ for variables $x$, and $w(t) = w(f) + \sum_{i} w(t_i)$
for $t = f(t_1, \ldots, t_n)$.
We write $f^n(t)$ for the $n$-times application of a unary function symbol $f$ to a term $t$.
Let $(w_0, w)$ be a weight function and $\succ$ a precedence, that is, a strict order on function symbols.
The Knuth--Bendix order $>_\kbo$ is inductively defined as follows: $s >_\kbo t$ if
$|s|_x \geqslant |t|_x$ for all variables $x$, and
\begin{enumerate}[1.]
\item $w(s) > w(t)$, or
\item $w(s) = w(t)$ and one of the following conditions holds.
\begin{enumerate}[a.]
    \item $s = f^n(t)$ for some $n > 0$ and $t$ is a variable.
    \item $s = f(\seq[m]{s})$, $t = g(\seq[n]{t})$, and either
    \begin{enumerate}[i.]
        \item $f \succ g$ or
        \item $f = g$ and there is an $i$ such that $s_i >_\kbo t_i$ and $s_j = t_j$ for all $j < i$.
    \end{enumerate}
\end{enumerate}
\end{enumerate}
The weight function $(w_0, w)$ is \emph{admissible} for $\succ$,
if $f \succ g$ for other function symbols $g$
whenever $f$ is a unary function symbol with $w(f) = 0$.
If $\succ$ is well-founded and $(w_0, w)$ is admissible for $\succ$,
then $({\geqslant_\kbo}, {>_\kbo})$ is a monotone reduction pair,
where $\geqslant_\kbo$ stands for the reflexive closure of $>_\kbo$.
(We note that in that case $>_\kbo$ is a so-called reduction order.)

\emph{Argument filtering}~\cite{AG00} is a popular transformation for
building reduction pairs with the Knuth--Bendix order.
As the name suggests, this transformation
filters out arguments from a given term. Formally,
an \emph{argument filter} $\pi$ is a mapping that 
associates
each $n$-ary function symbol $f$ to an integer $i$ or a list
$[i_1,\ldots,i_m]$ of integers over $\{1,\ldots,n\}$.  
The \emph{argument filtering} $\hat{\pi}$ is defined as follows:
\[
\hat{\pi}(t) =
\begin{cases}
t & \text{if $t$ is a variable}
\\
\hat{\pi}(t_i)
& \text{if $t = f(t_1, \ldots, t_n)$ and $\pi(f) = i$}
\\
f(\hat{\pi}(t_{i_1}),\ldots,\hat{\pi}(t_{i_m}))
& \text{if $t = f(\seq{t})$ and $\pi(f) = [\seq[m]{i}]$}
\end{cases}
\]
Note that arities of function symbols may change by applying
$\hat{\pi}$.  Given a binary relation $R$ on terms, we write 
$s \mathrel{R}^\pi t$ if $\hat{\pi}(s) \mathrel{R} \hat{\pi}(t)$.

\begin{proposition}
\label{prop:af}
Let $({\geqslant}, {>})$ be a reduction pair and $\pi$ an argument filter.
Then $({\geqslant^\pi}, {>^\pi})$ is a reduction pair.
\end{proposition}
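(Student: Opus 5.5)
The idea is to reduce every property of $({\geqslant^\pi},{>^\pi})$ to the matching property of $({\geqslant},{>})$, using two commutation facts about $\hat{\pi}$. The first concerns substitutions. For a substitution $\sigma$, define $\sigma^\pi$ by $\sigma^\pi(x) = \hat{\pi}(\sigma(x))$. We show by structural induction on $t$ that $\hat{\pi}(t\sigma) = \hat{\pi}(t)\sigma^\pi$. For a variable $t = x$ both sides are $\hat{\pi}(\sigma(x))$. If $\pi(f) = i$, then $\hat{\pi}(f(\seq{t})\sigma) = \hat{\pi}(t_i\sigma) = \hat{\pi}(t_i)\sigma^\pi = \hat{\pi}(f(\seq{t}))\sigma^\pi$. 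The list case is the same, applied argumentwise.

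The second fact concerns contexts. For a context $C$ we show that either $\hat{\pi}(C[t]) = \hat{\pi}(C[u])$ for all terms $t,u$, or there is a context $D$ with $\hat{\pi}(C[t]) = D[\hat{\pi}(t)]$ for all $t$. We treat $\square$ as a variable so that $\hat{\pi}$ is defined on contexts. The proof is by induction on $C$. If $C = \square$, take $D = \square$. If $C = f(\ldots, C', \ldots)$ with the hole at argument $j$, there are three cases:
\begin{itemize}
\item if $\pi(f) = j$, the result follows from the induction hypothesis for $C'$;
\item if $\pi(f)$ is an integer other than $j$, or a list not containing $j$, the hole is dropped and the image is independent of $t$;
\item if $\pi(f)$ is a list containing $j$ (possibly several times), then $j$ being repeated would create several holes.
\end{itemize}
Repeated indices in the list are the main obstacle. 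I will handle it by generalising the claim to multi-hole contexts. If $\hat{\pi}(C[t]) = D[\hat{\pi}(t),\ldots,\hat{\pi}(t)]$, then closure of $\geqslant$ under contexts, applied one hole at a time, together with transitivity of the preorder $\geqslant$, gives $D[a,\ldots,a] \geqslant D[b,\ldots,b]$ whenever $a \geqslant b$. Zero holes means the two sides are equal, and then reflexivity gives the inequality.

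With these facts the properties follow. First, ${\geqslant^\pi}$ is a preorder and ${>^\pi}$ is a strict order, and the compatibility condition $a > d$ whenever $a \geqslant b > c \geqslant d$ holds. All of these hold because they are pointwise statements about the images under $\hat{\pi}$. Second, ${>^\pi}$ is well-founded: an infinite descending chain $t_1 >^\pi t_2 >^\pi \cdots$ would give an infinite chain $\hat{\pi}(t_1) > \hat{\pi}(t_2) > \cdots$ in the original order. Third, stability follows from the substitution fact, since $\hat{\pi}(s) > \hat{\pi}(t)$ implies $\hat{\pi}(s)\sigma^\pi > \hat{\pi}(t)\sigma^\pi$, and this is exactly $s\sigma >^\pi t\sigma$. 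Finally, closure of ${\geqslant^\pi}$ under contexts follows from the context fact.

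One technical point needs care. Filtered terms live over the modified signature, where arities may have changed, and $({\geqslant},{>})$ must be understood as a reduction pair on that signature. Within that signature, $\sigma^\pi$ is a legitimate substitution, since it changes only finitely many variables, and the contexts $D$ are legitimate contexts.
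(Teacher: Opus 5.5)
Your proof is correct and is the standard argument for this fact, which the paper states without proof (it relies on the argument-filtering literature \cite{AG00}): $\hat{\pi}$ commutes with substitutions via $\sigma^\pi$, and closure of $\geqslant^\pi$ under contexts reduces to closure of $\geqslant$ under single-hole contexts, iterated over the possibly several (or zero) copies of the hole, together with transitivity and reflexivity. Your treatment of repeated list indices is warranted by this paper's definition of argument filters, and your context fact is simply the substitution fact applied to $C$ with $\square$ read as a variable, so you can take $D = \hat{\pi}(C)$ directly and skip the separate induction.
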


\section{Combinability Criterion}
\label{sec:criterion}

Lexicographic combination is a well-known method to turn two order pairs
into a single order pair which captures a more complicated
termination measure.

\begin{definition}
\label{def:lex-comb}
Let $({\geqslant_1}, {>_1})$ and $({\geqslant_2}, {>_2})$ be 
order pairs on a set $X$.  The \emph{lexicographic combination}
$({\geqslant_{12}}, {>_{12}})$ is the pair of relations on $X$ defined
as follows:
\begin{itemize}
\item
$x \geqslant_{12} y$ if $x >_1 y$, or both $x \geqslant_1 y$ and
$x \geqslant_2 y$.
\item
$x >_{12} y$ if $x >_1 y$, or both $x \geqslant_1 y$ and $x >_2 y$.
\end{itemize}
\end{definition}

Lexicographic combinations of reduction pairs satisfy
all conditions to be reduction pairs,
except for closure under contexts of preorders.

\begin{example}
\label{ex:lex-comb-redpair-invalid}
Consider the reduction pairs $({\geqslant_\AA}, {>_\AA})$ and $({\geqslant_\BB}, {>_\BB})$ induced by the following
linear polynomial interpretations $\AA$ and $\BB$:
\begin{align*}
\m{f}_\AA(x) &= 0
&
\m{a}_\AA &= 1
&
\m{b}_\AA &= 0
\\
\m{f}_\BB(x) &= x
&
\m{a}_\BB &= 0
&
\m{b}_\BB &= 1
\end{align*}
Although $\m{a} \geqslant_{\AA \BB} \m{b}$ follows from $\m{a} >_\AA
\m{b}$, the desired inequality $\m{f}(\m{a}) \geqslant_{\AA \BB} \m{f}(\m{b})$ for closure under contexts
 does not hold.
Even worse, the flipped inequality
$\m{f}(\m{b}) >_{\AA \BB} \m{f}(\m{a})$ follows from
$\m{f}(\m{b}) \geqslant_\AA \m{f}(\m{a})$ and
$\m{f}(\m{b}) >_\BB \m{f}(\m{a})$.
\end{example}

As noted in~\cite{ZWM15}, the problem in
\cref{ex:lex-comb-redpair-invalid} is that, in the first component
$\m{f}_\AA(x) = 0$ is not monotone with respect to the argument $x$,
while $\m{f}_\BB(x) = x$ in the second component is dependent on $x$.
This observation suggests the following definitions.

\begin{definition}
\label{def:monotone-invariant}
Let $({\geqslant}, {>})$ be an order pair on terms.
Let $f$ be an $n$-ary function symbol.  The $i$-th argument position of
$f$ is $>$-\emph{monotone} (or monotone with respect to $>$) if 
\[
t_i > u \implies
f(t_1,\ldots,t_i,\ldots,t_n) > f(t_1,\ldots,u,\ldots,t_n)
\]
for all terms $\seq{t},u$.  Similarly, the position is $\geqslant$-\emph{invariant} if
\[
f(t_1,\ldots,t_i,\ldots,t_n) \geqslant f(t_1,\ldots,u,\ldots,t_n)
\]
for all terms $\seq{t},u$.  
An order pair $({\geqslant},{>})$ is said to be \emph{normal} if ${>} \subseteq
{\geqslant}$ holds. We say that an order pair $({\geqslant_1},{>_1})$ on
terms is \emph{combinable} with another order pair $({\geqslant_2},{>_2})$ on
terms  if $({\geqslant_1},{>_1})$ is normal and every argument position of any
function symbol is $>_1$-monotone or $\geqslant_2$-invariant.
If the order of combination can be inferred from the context,
we may simply call them combinable.
\end{definition}

If we consider the equivalence relation $\sim$ induced by $\geqslant$,
invariance is equivalent to
\(
f(t_1,\ldots,t_i,\ldots,t_n) \sim f(t_1,\ldots,u,\ldots,t_n)
\)
for all $\seq[n]{t}, u$. This justifies its name.

Recall the form of linear polynomial interpretations $\AA$:
\[
f_\AA(\seq{x}) = a_0 + a_1x_1 + \cdots + a_nx_n
\]
Trivially, the $i$-th argument position of $f$ is monotone with respect to $>_\AA$ if $a_i > 0$,
and invariant with respect to $\geqslant_\AA$ otherwise.
Moreover, $({\geqslant_\AA},{>_\AA})$ is normal.

\begin{example}[continued from \cref{ex:lex-comb-redpair-invalid}]
The first argument position of $\m{f}$ is neither monotone with respect to
$>_\AA$ nor invariant with respect to $\geqslant_\BB$.  Thus,
$({\geqslant_\AA},{>_\AA})$ is \emph{not} combinable with
$({\geqslant_\BB},{>_\BB})$.
\end{example}

We have similar facts for matrix interpretations $\AA$:
let the interpretation of an $n$-ary function symbol $f$ be
\(
f_\AA(\seq{\vec{x}}) = A_1\vec{x}_1 + \cdots + A_n\vec{x}_n + \vec{a}
\).
The $i$-th argument position of $f$ is monotone if $(A_i)_{1, 1} > 0$,
and invariant if $A_i = O$, see~\cite{EWZ08}.
Again, $({\geqslant_\AA},{>_\AA})$ is normal.

\begin{lemma}
\label{lem:contexts}
Let  $({\geqslant_1}, {>_1})$ be a reduction pair combinable with 
another reduction pair $({\geqslant_2}, {>_2})$.
Then $\geqslant_{12}$ is closed under contexts.
\end{lemma}
\begin{proof}
It is sufficient to show the monotonicity of $\geqslant_{12}$, meaning that
if $t_i \geqslant_{12} u$ then $C[t_i] \geqslant_{12} C[u]$ for an arbitrary 
context $C$ of the form $f(t_1,\ldots,\square,\ldots,t_n)$.  
Suppose $t_i \geqslant_{12} u$.  If it follows from
$t_i \geqslant_1 u$ and $t_i \geqslant_2 u$,
then $C[t_i] \geqslant_{12} C[u]$ follows from the assumptions
that $\geqslant_1$ and $\geqslant_2$ are closed under contexts.
Otherwise, $t_i \geqslant_{12} u$ follows from $t_i >_1 u$.
According to the combinability, the $i$-th argument position of $f$ is 
monotone in $>_1$ or invariant in $\geqslant_2$.  In the former case 
$C[t_i] >_1 C[u]$ follows from $t_i >_1 u$.
In the latter case,
$t_i \geqslant_1 u$ from the normality, and therefore  
$C[t_i] \geqslant_1 C[u] $
from the closure under contexts.
The invariance yields
$C[t_i] \geqslant_2 C[u]$.  Hence, in either case 
$C[t_i] \geqslant_{12} C[u]$ is concluded.
\qed
\end{proof}

\begin{theorem}
\label{thm:combinability}
Lexicographic combinations of combinable reduction pairs are reduction pairs.
\end{theorem}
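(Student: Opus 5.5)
We need to check that $({\geqslant_{12}},{>_{12}})$ satisfies each defining condition of a reduction pair. Closure under contexts of $\geqslant_{12}$ is exactly \cref{lem:contexts}, so the work is in the remaining conditions: $\geqslant_{12}$ is a preorder, $>_{12}$ is a strict order, the compatibility condition of order pairs holds, both relations are closed under substitutions, and $>_{12}$ is well-founded. Each of these is a routine case analysis on the disjuncts in \cref{def:lex-comb}. The normality of $({\geqslant_1},{>_1})$, which is part of combinability, lets us treat the first disjunct uniformly: whenever $x >_1 y$ we also have $x \geqslant_1 y$. Hence $x \geqslant_{12} y$ always implies $x \geqslant_1 y$.

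Reflexivity of $\geqslant_{12}$ comes from reflexivity of $\geqslant_1$ and $\geqslant_2$. Irreflexivity of $>_{12}$ holds because $x >_1 x$ is impossible and $x >_2 x$ is impossible.

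For the two transitivity-like properties we argue as follows.
\begin{itemize}
\item \emph{Transitivity of $\geqslant_{12}$ and of $>_{12}$.} From $x \geqslant_{12} y \geqslant_{12} z$, if either step is a $>_1$ step we combine it with the other step, which is $\geqslant_1$ by normality. The order-pair property of $({\geqslant_1},{>_1})$ (with reflexivity of $\geqslant_1$) then gives $x >_1 z$. Otherwise both steps hold in $\geqslant_1$ and in $\geqslant_2$, and transitivity of each preorder gives $x \geqslant_{12} z$. The same argument works for $>_{12}$.
\item \emph{Compatibility.} Suppose $a \geqslant_{12} b >_{12} c \geqslant_{12} d$. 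If some step involves $>_1$, we get $a >_1 d$ by normality and compatibility of the first pair. Otherwise $a \geqslant_1 d$, and $a \geqslant_2 b >_2 c \geqslant_2 d$ gives $a >_2 d$.
\end{itemize}

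Closure under substitutions is immediate, since each disjunct is built from relations closed under substitutions.

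The main point is well-founded ness of $>_{12}$. Suppose $x_1 >_{12} x_2 >_{12} \cdots$ is an infinite chain. Each step satisfies $x_i \geqslant_1 x_{i+1}$. Only finitely many steps can be $>_1$ steps: otherwise, composing consecutive segments using compatibility would give an infinite $>_1$-chain. So from some index on, every step is a $\geqslant_1$-and-$>_2$ step, which yields an infinite $>_2$-chain, contradicting well-foundedness of $>_2$. I expect this to be the only place that needs care, namely collapsing runs of $\geqslant_1$ steps onto the $>_1$ steps; the order-pair property handles this.

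Altogether these facts give \cref{thm:combinability}.
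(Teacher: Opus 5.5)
Your proof is correct and follows the paper's route. The paper derives the theorem from \cref{lem:contexts} together with its remark, stated without proof, that lexicographic combinations of reduction pairs satisfy every reduction-pair condition except closure under contexts; you spell out exactly that routine case analysis. One minor point: in those routine checks, normality can be replaced by transitivity of $>_1$ plus compatibility and reflexivity of $\geqslant_1$, so normality is really needed only for closure under contexts (cf.\ \cref{ex:normality}).
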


\begin{remark}
\label{rem:normality}
The normality requirement cannot be dropped from the combinability
criterion (\cref{thm:combinability}), see~\cref{sec:omitted}.
However, normality is a mild requirement, in the sense that
typical reduction pairs like the Knuth--Bendix order~\cite{KB70},
the recursive path order~\cite{D82,KL80},
the matrix interpretation~\cite{EWZ08}, and
the polynomial interpretation~\cite{L79} are all normal reduction pairs.
Furthermore, lexicographic combinations of normal reduction pairs are normal.
\end{remark}

We demonstrate a termination proof based on lexicographic combination.

\begin{example}
\label{ex:plus1}
Recall the TRS $\RR$ in the introduction:
\begin{align*}
1\colon\:
\m{s}(x) + y & \to \m{p}(\m{s}(x)) + \m{s}(y)
&
2\colon\: \m{p}(\m{s}(x)) & \to x
\end{align*}
The set $\DP(\RR)$ consists of the two dependency pairs:
\begin{align*}
3\colon\:
\m{s}(x) +^\sharp y & \to \m{p}(\m{s}(x)) +^\sharp \m{s}(y)
&
4\colon\:
\m{s}(x) +^\sharp y & \to \m{p}^\sharp(\m{s}(x))
\end{align*}
Let $\AA$ and $\BB$ be the linear interpretations given by:
\begin{align*}
\m{s}_\AA(\underline{x}) &= x + 1
&
\m{p}_\AA(\underline{x}) &= x
&
\m{p}^\sharp_\AA(x) &= 0
&
\underline{x} +_\AA y &= x
&
\underline{x} +^\sharp_\AA y &= x
\\
\m{s}_\BB(x) &= x + 1
&
\m{p}_\BB(\overline{x}) &= 0
&
\m{p}^\sharp_\BB(\overline{x}) &= 0
&
x +_\BB \overline{y} &= x
& 
x +^\sharp_\BB \overline{y} &= x
\end{align*}
Here monotone positions in $\AA$ and invariant positions in $\BB$ are
indicated by underlining and overlining, respectively.  The induced
reduction pairs are combinable, so let
$({\geqslant_{\AA\BB}},{>_{\AA\BB}})$ be their lexicographic combination.
Then $\geqslant_{\AA\BB}$ orients the rules in $\RR$ while $>_{\AA\BB}$
orients those in $\DP(\RR)$:
\begin{align*}
1\colon\: & x+1 \geqslant_\AA x+1
&
2\colon\: & x+1 >_\AA x
&
3\colon\: & x+1 \geqslant_\AA x+1
&
4\colon\: & x+1 >_\AA 0
\\
1\colon\:
& x+1 \geqslant_\BB 0
&
&
&
3\colon\:
& x+1 >_\BB 0
\end{align*}
Hence, the termination of $\RR$ is concluded by \cref{thm:dp,thm:rp}.
\end{example}

\begin{remark}
\label{rem:rp}
The following fact gives an alternative method for combining reduction
pairs lexicographically: Given reduction pairs $({\geqslant_1},{>_1})$ and
$({\geqslant_2},{>_2})$, the pair
$({\geqslant_1} \cap {\geqslant_2},{>_{12}})$ forms a reduction pair, where
$>_{12}$ is defined as in \cref{def:lex-comb}.
\cref{thm:rp} with such a reduction pair is equivalent to successive
application of the same theorem with the two.  However, this approach cannot handle
\cref{ex:plus1} with the reduction pairs used there, as it imposes the
additional constraint $\m{p}(\m{s}(x)) \geqslant_\BB x$.
This prevents a successful termination proof, as mentioned in the
introduction.  It is worth noting that the use of the usable rule
criterion~\cite{HM07,TGS04} for \cref{thm:rp} is not helpful here
as $\m{p}(\m{s}(x)) \to x$ is a usable rule.
\end{remark}

If monotone and invariant positions of combined reduction pairs are
identified, one can combine multiple reduction pairs by successively applying
\cref{thm:combinability}.
Let $({\geqslant_{12}}, {>_{12}})$ be the lexicographic combination of
combinable reduction pairs $({\geqslant_1}, {>_1})$ and 
$({\geqslant_2}, {>_2})$.

\begin{lemma}
\label{lem:propagation}
If the $i$-th argument position of a function symbol is $>_2$-monotone
and $>_2$ is non-empty then the position is $>_1$-monotone.  Similarly, if
the $i$-th argument position of a function symbol is
$\geqslant_1$-invariant and $>_1$ is non-empty then the position is
$\geqslant_2$-invariant.
\end{lemma}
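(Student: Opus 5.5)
Both claims follow by contradiction from the combinability of $({\geqslant_1},{>_1})$ with $({\geqslant_2},{>_2})$. Combinability says that every argument position of every function symbol is $>_1$-monotone or $\geqslant_2$-invariant. So if a position fails to have one of these two properties, it must have the other. In each case I will show that the other property, together with the hypothesis of the claim, forces a strict order to relate a term to itself. This contradicts irreflexivity. The only order-theoretic fact needed is the compatibility condition of order pairs, $a > d$ whenever $a \geqslant b > c \geqslant d$, combined with reflexivity of the preorder.

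For the first claim, fix an $n$-ary $f$ whose $i$-th position is $>_2$-monotone, and assume $>_2$ is non-empty, say $s >_2 t$. Suppose the position is not $>_1$-monotone. By combinability it is then $\geqslant_2$-invariant. Fix arbitrary terms $t_1,\ldots,t_n$ and write $C = f(t_1,\ldots,\square,\ldots,t_n)$ with $\square$ at position $i$. Monotonicity gives $C[s] >_2 C[t]$, and invariance gives $C[t] \geqslant_2 C[s]$. The chain $C[s] \geqslant_2 C[s] >_2 C[t] \geqslant_2 C[s]$ and compatibility then yield $C[s] >_2 C[s]$. This is impossible since $>_2$ is a strict order. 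Hence the position is $>_1$-monotone.

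The second claim is symmetric. Suppose the $i$-th position of $f$ is $\geqslant_1$-invariant, $s >_1 t$, and the position is not $\geqslant_2$-invariant. By combinability it is $>_1$-monotone. With $C$ as before, monotonicity gives $C[s] >_1 C[t]$, and invariance gives $C[t] \geqslant_1 C[s]$. Compatibility again yields $C[s] >_1 C[s]$, a contradiction.

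I do not expect a real obstacle. The one point to keep in view is that the non-emptiness hypotheses are exactly what provides the witnesses $s,t$. Without them, monotonicity with respect to an empty relation holds vacuously and the implications fail. Normality and the reduction-pair properties are not needed beyond the order-pair axioms.
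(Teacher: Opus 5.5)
Your proof is correct and follows the same route as the paper. Non-emptiness supplies a witness pair, and monotonicity together with invariance at the same position yields $C[s] >_2 C[t] \geqslant_2 C[s]$ (respectively with $>_1$, $\geqslant_1$), which is impossible; combinability then gives the other property. The differences are only in presentation: you make explicit the compatibility and irreflexivity step that the paper compresses into ``leads to a contradiction'', and you write out the second claim, which the paper only calls similar.
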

\begin{proof}
We only show the first claim because the second is shown in a similar way.
Due to the non-emptiness of $>_2$, there are terms $t$ and $u$ with 
$t >_2 u$. By monotonicity
\(
f(t,\ldots, t,\ldots, t) >_2
f(t, \ldots, u, \ldots, t)
\)
is obtained.  
If the $i$-th argument position of $f$ were $\geqslant_2$-invariant, we
would also have 
\(
f(t,\ldots, u, \ldots, t) \geqslant_2 
f(t,\ldots, t, \ldots, t)
\),
which leads to a contradiction.
So the $i$-th argument position of $f$ is not
$\geqslant_2$-invariant, and therefore the claim follows from the
combinability.
\qed
\end{proof}

\begin{theorem}
\label{thm:propagation}
If the $i$-th argument of a function symbol is $>_2$-monotone and ${>_2}$ is non-empty,
then it is $>_{12}$-monotone too. Similarly,
if the $i$-th argument of a function symbol is $\geqslant_1$-invariant and
$>_1$ is non-empty, then it is $\geqslant_{12}$-invariant too. 
\end{theorem}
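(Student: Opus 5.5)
The plan is to use \cref{lem:propagation} to move each hypothesis from one component to the other. After that, both claims follow by unfolding \cref{def:lex-comb} and using closure under contexts of $\geqslant_1$ and $\geqslant_2$. Throughout, let $C = f(t_1,\ldots,\square,\ldots,t_n)$ with the hole at position $i$.

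\emph{First claim.} Assume the $i$-th position of $f$ is $>_2$-monotone and $>_2$ is non-empty. By \cref{lem:propagation} the position is also $>_1$-monotone. Now suppose $t_i >_{12} u$; there are two cases.
If $t_i >_1 u$, then $>_1$-monotonicity gives $C[t_i] >_1 C[u]$, and hence $C[t_i] >_{12} C[u]$.
Otherwise $t_i \geqslant_1 u$ and $t_i >_2 u$. Closure of $\geqslant_1$ under contexts gives $C[t_i] \geqslant_1 C[u]$. The assumed $>_2$-monotonicity gives $C[t_i] >_2 C[u]$. Together these yield $C[t_i] >_{12} C[u]$.

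\emph{Second claim.} Assume the position is $\geqslant_1$-invariant and $>_1$ is non-empty. By the second half of \cref{lem:propagation} it is also $\geqslant_2$-invariant. So for arbitrary $t_1,\ldots,t_n,u$ we have both $C[t_i] \geqslant_1 C[u]$ and $C[t_i] \geqslant_2 C[u]$. By the second disjunct in the definition of $\geqslant_{12}$, this gives $C[t_i] \geqslant_{12} C[u]$.

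I expect no real obstacle. The only point to check is that the case split for $>_{12}$ in the first claim is exhaustive, which holds directly by \cref{def:lex-comb}. The substantive work, namely ruling out $\geqslant_2$-invariance via the non-emptiness of $>_2$ and then appealing to combinability, has already been done inside \cref{lem:propagation}.
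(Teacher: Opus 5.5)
Your proof is correct and follows the paper's argument. The first claim is handled exactly as in the paper: \cref{lem:propagation} covers the case $t_i >_1 u$, and closure of $\geqslant_1$ under contexts together with $>_2$-monotonicity covers the remaining case. For the second claim, which the paper only calls analogous, combining the given $\geqslant_1$-invariance with the $\geqslant_2$-invariance from the second half of \cref{lem:propagation} is precisely the intended argument.
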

\begin{proof}
Again, we only show the first claim.
Let
$t = f(t_1, \ldots, t_i, \ldots, t_n)$, $u = f(t_1,\ldots,u',\ldots,t_n)$,
and $t_i >_{12} u'$.  If $t_i >_{12} u'$ is due to $t_i >_1 u'$ then
\cref{lem:propagation} yields $t >_1 u$, from which $t >_{12} u$ follows.
Otherwise, $t_i \geqslant_1 u'$ and $t_i >_2 u'$.  As $\geqslant_1$ is
closed under contexts and the $i$-th position is $>_2$-monotone, 
the inequalities $t \geqslant_1 u$ and $t >_2 u$ follow.
Hence, $t >_{12} u$ is concluded.
\qed
\end{proof}

The non-emptiness condition in \cref{thm:propagation} cannot be dropped,
see~\cref{sec:omitted}.
In practice,
this is not a problem, because a usual reduction pair
(including those mentioned in \cref{rem:normality})
can be extended to another one $({\geqslant}, {>})$
so that satisfies $\m{c}_1 > \m{c}_2$ for fresh constant symbols $\m{c}_1$ and $\m{c}_2$.

\begin{example}[continued from \cref{ex:plus1}]
\label{ex:plus2}
We show the termination of the extended system $\RR' = \RR \cup \{ \m{0} + y \to y \}$.
The set $\DP(\RR')$ coincides with $\DP(\RR)$.  The linear interpretation $\CC$ with
\begin{align*}
\m{0}_\CC &= 1
&
\m{s}_\CC(\underline{x}) &= x 
&
\m{p}_\CC(\underline{x}) &= x
&
\m{p}^\sharp_\CC(x) &= 0
&
\underline{x} +_\CC \underline{y} &= x + y
&
\underline{x} +^\sharp_\CC y &= x
\end{align*}
satisfies $\m{0} + y >_\CC y$ and 
$\RR' \cup \DP(\RR) \subseteq {\geqslant_\CC}$.
Therefore, $\RR \subseteq {\geqslant_{\CC\AA\BB}}$ and
$\DP(\RR) \subseteq {>_{\CC\AA\BB}}$ are obtained
if we extend $\AA$ and $\BB$ with $\m{0}_{\AA} = 0$ and $\m{0}_{\BB} = 0$.
The combinability 
can be verified by \cref{thm:propagation}. 
Hence, $\RR$ is terminating.
\end{example}

As a side note, order of combination does not matter to combinability:
Let $({\geqslant_1},{>_1})$, $({\geqslant_2},{>_2})$, and
$({\geqslant_3},{>_3})$ be reduction pairs such that
$>_1$, $>_2$, and $>_3$ are non-empty. 
If $({\geqslant_1},{>_1})$ is combinable with $({\geqslant_2},{>_2})$ and also
$({\geqslant_2},{>_2})$ with $({\geqslant_3},{>_3})$,
then, by  \cref{thm:propagation},  $({\geqslant_{12}},{>_{12}})$ is combinable with
$({\geqslant_3},{>_3})$ and 
$({\geqslant_1},{>_1})$ with $({\geqslant_{23}},{>_{23}})$.

\section{Termination of the Battle of Hercules and Hydra}
\label{sec:hydra}

In this short section, we demonstrate a termination proof based on
heterogeneous combination of reduction pairs.  Among others,
we pick up Touzet's TRS encoding of the
Battle of Hercules and Hydra~\cite{T98}; see also~\cite{DM07}
for its backgrounds. Touzet's rewrite system $\HH$ consists of the
following eleven rules:
\begin{alignat*}{4}
1\colon &\;&
{\talloblong\,\circ}~x &\to {\circ\,\talloblong}\,x
&\hspace{4em}
6\colon &\;&
\m{H}(\m{0},x) &\to {\circ}~x
\\
2\colon &&
{\bullet\,\talloblong}\,x &\to {\talloblong\bullet\bullet}~x
&
7\colon &&
\bullet~\m{H}(\m{H}(\m{0},y),z) &\to \m{c^1}(y,z)
\\
3\colon &&
{\circ}~x &\to {\bullet\,\talloblong}\,x
&
8\colon &&
{\bullet}~\m{H}(\m{H}(\m{H}(\m{0},x),y),z) &\to
\m{c^2}(x,y,z)
\\
4\colon &&
{\bullet}~x &\to x
&
9\colon &&
{\bullet}~\m{c^1}(x,y) &\to \m{c^1}(x,\m{H}(x,y))
\\
5\colon &&
\m{c^1}(y,z) &\to {\circ}~z
&
10\colon &&
{\bullet}~\m{c^2}(x,y,z) &\to \m{c^2}(x,\m{H}(x,y),z)
\\
&&&
&
11\colon &&
\m{c^2}(x,y,z) &\to {\circ}~\m{H}(y,z)
\end{alignat*}
Symbols $\talloblong$, $\circ$, and $\bullet$ are unary function symbols
and their parentheses are omitted here.  

While the original termination proof of $\HH$ uses a sophisticated algebra on
$\mathbb{O} \times \NN \times \NN$, our proof employs the combination of an
ordinal interpretation on $\mathbb{O}$ and a Knuth--Bendix order.  Here
$\mathbb{O}$ is the set of all ordinal numbers below $\varepsilon_0$.  
In order to ease the proof, we introduce an easy corollary of
\cref{thm:dp,thm:rp}.
Below, we write $\Emb$ for the TRS consisting of the \emph{embedding rule}
$f(\seq{x}) \to x_i$ for all $n$-ary function symbols $f$ and $1 \leqslant i
\leqslant n$.

\begin{corollary}
\label{cor:rp}
A TRS $\RR$ is terminating if
$\RR \subseteq {>}$ and ${\Emb} \subseteq {\geqslant}$ for some
reduction pair $({\geqslant},{>})$.
\end{corollary}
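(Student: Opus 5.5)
By \cref{thm:dp}, it suffices to show that $(\DP(\RR),\RR)$ is finite. Using \cref{thm:rp}, I want to remove every dependency pair. For that I need $\DP(\RR) \subseteq {>}$ and $\DP(\RR) \cup \RR \subseteq {\geqslant}$. Then $(\DP(\RR) \setminus {>}, \RR) = (\emptyset, \RR)$, which is trivially finite.

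The relation $\geqslant$ lives on terms over the original signature, while dependency pairs use the fresh symbols $f^\sharp$. I will extend the given reduction pair to $\TT^\sharp$ by identifying $f^\sharp$ with $f$. Concretely, let $\hat{\ }$ replace each root $f^\sharp$ by $f$, and define $s \mathrel{R'} t$ iff $\hat s \mathrel{R} \hat t$ for $R \in \{\geqslant, >\}$ on marked terms, while keeping $R$ on ordinary terms. Marked symbols occur only at the root, so closure under substitutions is preserved and well-foundedness is inherited. Closure under contexts for unmarked contexts is also inherited. Contexts headed by $f^\sharp$ reduce, after unmarking, to contexts headed by $f$, so the closure property carries over as well. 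Thus $({\geqslant'},{>'})$ is a reduction pair. This is essentially \cref{prop:af} with a renaming, and I would verify it directly.

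It remains to check the orientation conditions.
\begin{enumerate}[1.]
\item For a rule $\ell \to r$ in $\RR$, we have $\ell > r$ by assumption. By normality-like reasoning, however, ${>} \subseteq {\geqslant}$ is not assumed. So $\RR \subseteq {\geqslant}$ must come another way. I would obtain it from $\ell > r$ together with $r \geqslant r$ via reflexivity and the compatibility $a > d$ whenever $a \geqslant b > c \geqslant d$. But this yields only $>$ again, not $\geqslant$. I will instead weaken the requirement: apply \cref{thm:rp} to the problem $(\DP(\RR), \RR)$ using the reduction pair $({\geqslant'} \cup {>'}, {>'})$. This is still an order pair, and its preorder is still closed under contexts, because $>$ is closed under contexts. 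Indeed, \emph{$>$ is closed under contexts} here is the crux. It does not follow from the reduction-pair axioms, but it can be derived from $\Emb \subseteq {\geqslant}$, as follows.
\item Each dependency pair $\ell^\sharp \to t^\sharp$ satisfies $r \superterm t$. Using $\Emb \subseteq {\geqslant}$, closure under substitutions, and closure of $\geqslant$ under contexts, iterated embedding steps give $r\sigma \geqslant t\sigma$ for all $\sigma$. Combined with $\ell > r$, compatibility gives $\ell > t$, hence $\ell^\sharp >' t^\sharp$.
\end{enumerate}

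The main obstacle is obtaining $\RR \subseteq {\geqslant}$, or more exactly a usable weak part, since the hypotheses give only $\RR \subseteq {>}$. My preferred route avoids the issue altogether. Replace the preorder by ${\geqslant}$ itself, and handle $\RR$-steps in the chain directly. Given an infinite chain $s_1 \to_\RR^* t_1 \rto_\PP s_2 \cdots$, map every $\RR$-step $C[\ell\sigma] \to C[r\sigma]$ into $\geqslant$ instead. For that we need $\ell\sigma \geqslant r\sigma$, which again asks for ${>} \subseteq {\geqslant}$.

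So I would simply argue directly and skip \cref{thm:rp}. Suppose $\RR$ is not terminating. Then, by \cref{thm:dp}, there is a minimal infinite chain in which each $\RR$-step happens strictly below the root of a $\sharp$-term. Each dependency-pair step, by the argument in step 2, is a strict $>$-decrease. Each inner $\RR$-step $u \to_\RR v$ at a non-root position is equally a decrease under $>$, using $\Emb$ to compare via subterms, or at worst it can be absorbed. The combined compatibility then produces an infinite $>$-descent, contradicting well-foundedness.

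The delicate point is the inner $\RR$-steps. If that cannot be made to work without normality, I would fall back on assuming normality, which holds for all reduction pairs in \cref{rem:normality}. The statement then follows immediately from \cref{thm:rp} and step 2.
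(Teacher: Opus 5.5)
Your fallback (assume normality) is the right proof, and it coincides with the paper's. Unmark the dependency pairs. Use $r \to_\Emb^* t$ (closure of $\geqslant$ under contexts and substitutions) together with $\ell > r$ to get $\ell > r \geqslant t$, hence $\ell^\sharp >^\natural t^\sharp$, and conclude with \cref{thm:dp,thm:rp}. You are also right that $\RR \subseteq {\geqslant}$ does not follow from the stated hypotheses. The paper's proof simply asserts $\RR \subseteq {\geqslant^\natural}$, which tacitly uses normality ${>} \subseteq {\geqslant}$.

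Your attempts to avoid normality do not work, and you should drop them rather than keep them as live options. Closure of $>$ under contexts cannot be derived from $\Emb \subseteq {\geqslant}$. Inner $\RR$-steps of a minimal chain need not be $>$- or $\geqslant$-steps, so they cannot be ``absorbed''.

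In fact no argument can succeed, because the statement is false for non-normal reduction pairs. Take $\RR = \{\m{f}(\m{a}) \to \m{f}(\m{b}),\ \m{b} \to \m{a}\}$, which loops via $\m{f}(\m{a}) \to \m{f}(\m{b}) \to \m{f}(\m{a})$. Let ${\geqslant} = {\to_\Emb^*}$, and let $>$ be the transitive closure of ${\geqslant} \cdot \RR \cdot {\geqslant}$, reading the two ground rules as pairs. Then:
\begin{itemize}
\item $\geqslant$ is a preorder closed under contexts and substitutions.
\item Every $>$-successor lies in $\{\m{f}(\m{b}), \m{b}, \m{a}\}$. Within this set only $\m{f}(\m{b}) > \m{a}$ and $\m{b} > \m{a}$ hold, so descending chains have at most two steps.
\item $>$ is stable because the rules are ground, and compatibility holds by construction.
\end{itemize}
Thus $({\geqslant}, {>})$ is a reduction pair in the paper's sense with $\Emb \subseteq {\geqslant}$ and $\RR \subseteq {>}$. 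Yet $\m{b} > \m{a}$ holds while neither $\m{f}(\m{b}) > \m{f}(\m{a})$ nor $\m{f}(\m{b}) \geqslant \m{f}(\m{a})$ does, and $\RR$ is not terminating.

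So add normality to the hypothesis. This is harmless for the Hydra application: $\OO$ and KBO with argument filtering are normal, and lexicographic combinations of normal pairs are normal (\cref{rem:normality}). With that hypothesis, your step~2 plus \cref{thm:rp} is a complete proof.
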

\begin{proof}
Define $s \geqslant^\natural t$ and $s >^\natural t$ as 
$s^\natural \geqslant t^\natural$ and $s^\natural > t^\natural$,
respectively.  Here $t^\natural$ replaces all marked symbols $f^\sharp$ in
$t$ by the corresponding unmarked symbols $f$. Then
$({\geqslant^\natural},{>^\natural})$ is a reduction pair with 
$\RR \subseteq {\geqslant^\natural}$.
To show termination by \cref{thm:dp,thm:rp}, it remains to show 
$\DP(\RR) \subseteq {>^\natural}$.
Since every dependency pair $\ell^\sharp \to t^\sharp$ admits a rule
$\ell \to r \in \RR$ with $r \to_\Emb^* t$, we have $\ell > r \geqslant t$ and thus
$\ell^\sharp >^\natural t^\sharp$.
\qed
\end{proof}

Actually, we use the Knuth--Bendix order together with argument filtering.
For verifying combinability, we need to identify monotone and invariant
positions.  Consider an arbitrary argument filter $\pi$.  For brevity, we
write $i \in \pi(f)$ if $\pi(f) = i$ or $\pi(f) = [\ldots,i,\ldots]$. Let
$({\geqslant}, {>})$ be a monotone reduction pair.  Not surprisingly, the
$i$-th argument position of a function symbol $f$ is $>^\pi$-monotone if
$i \in \pi(f)$, and $\geqslant^\pi$-invariant otherwise.

To prove the termination of $\HH$, we use the same algebra $\OO$
on $\mathbb{O}$ as Touzet's original termination proof, that is:
\begin{align*}
\m{0}_\OO &= 0
&
\m{c}^\m{1}_\OO(\underline{x},y) &= y + \omega^{x + 1}
&
\circ_\OO(\underline{x}) &= x
&
{\talloblong_\OO}(\underline{x}) &= x
\\
\m{H}_\OO(\underline{x},\underline{y})
&= \omega^x \oplus y
&
\m{c}^\m{2}_\OO(\underline{x},y,\underline{z})
&= z \oplus \omega^{y + \omega^{x + 1}}
&
\bullet_\OO(\underline{x}) &= x
\end{align*}
Here $\oplus$ stands for natural addition
and monotone positions are
indicated by underlining.  Remark that the second argument positions of
$\m{c}^1$ and $\m{c}^2$ are not monotone, since the first argument position
of $+$ on ordinals is not monotone (e.g. $1 + \omega = 0 + \omega$).
According to~\cite[Lemma~3]{T98}, we have
\(
\{5,6,11\} \cup (\Emb \setminus \{12\,\text{--}\,14\})
\subseteq {>_\OO}
\)
and
\(
\{1\,\text{--}\,4,7\,\text{--}\,10,12\,\text{--}\,14\}
\subseteq {\geqslant_\OO}
\),
where $12\,\text{--}\,14$ denote the rules ${\circ}\,x \to x$,
${\bullet}\,x \to x$, and ${\talloblong}\,x \to x$ in $\Emb$,
respectively.  By taking the argument filter $\pi$ with
$\pi(\m{0}) = \pi(\m{H}) = \pi(\m{c^1}) = \pi(\m{c^2}) = \emptylist$,
and $\pi(\circ) = \pi(\bullet) = \pi(\talloblong) = [1]$
the rules in the latter group are simplified as follows:
\begin{alignat*}{8}
1\colon &\;&
{\talloblong\,\circ}~x &\to {\circ\,\talloblong}~x
&\qquad
4\colon &\;&
{\bullet}~x &\to x
&\qquad
9\colon &\;&
{\bullet}~\m{c^1} &\to \m{c^1}
&\qquad
12\colon &\;&
{\circ}~x &\to x
\\
2\colon &&
{\bullet\,\talloblong}~x &\to {\talloblong\bullet\bullet}~x
&
7\colon &\;&
\bullet~\m{H} &\to \m{c^1}
&
10\colon &&
{\bullet}~\m{c^2} &\to \m{c^2}
&
13\colon &&
{\bullet}~x &\to x
\\
3\colon &\;&
{\circ}~x &\to {\bullet\,\talloblong}~x
&
8\colon &&
{\bullet}~\m{H} &\to
\m{c^2}
&
&&
&
&
14\colon &&
{\talloblong}~x &\to x
\end{alignat*}
Consider the Knuth--Bendix order $>_\kbo$ induced from the precedence 
${\bullet} \succ {\talloblong} \succ {\circ} \succ \m{c^1} \succ \m{c^2}$
and the admissible weight function $(w_0,w)$ with:
\begin{align*}
w({\circ}) &= 2
&
w_0 &= w(\m{0}) = w(\m{H}) = w(\m{c^1}) = w(\m{c^2}) = w({\talloblong}) = 1
&
w({\bullet}) &= 0
&
\end{align*}
It is not difficult to see that the Knuth--Bendix order orients the above simplified
rules strictly,
and thus
\(
\{1\,\text{--}\,4,7\,\text{--}\,10,12\,\text{--}\,14\}
\subseteq {>_\kbo^\pi}
\).
As non-monotone argument positions with respect to $>_\OO$ are filtered out
by $\pi$, the reduction pairs $({\geqslant_\OO},{>_\OO})$ and
$({\geqslant_\kbo^\pi},{>_\kbo^\pi})$ are combinable.
Hence, the termination is concluded by \cref{cor:rp}.

\begin{remark}
\label{rem:rule-removal}
We stress that the use of \cref{thm:combinability} is crucial for obtaining
a termination proof (see \cref{sec:hydra-analysis}).  In particular, the
rule removal method~\cite{TGS04} is not applicable because the reduction
pair induced from Touzet's interpretation $\OO$ lacks monotonicity.
\end{remark}

\section{Echelon-Form Matrix Interpretation}
\label{sec:matrix}

In this section, we investigate another technique to construct reduction
pairs based on lexicographic comparison.
The underlying observation is that, linear polynomial interpretations
combined lexicographically (by the combinability criterion) can be regarded
as a variant of the matrix interpretation which uses the lexicographic
order pairs $({\geqslant^\lex}, {>^\lex})$.
Here $(\seq[n]{x})^T >^\lex (\seq[n]{y})^T$ if there exists an index
$1 \leqslant i \leqslant n$ such that $x_i > y_i$ and $x_j = y_j$ for all
$j < i$. The relation $\geqslant^\lex$ is the reflexive closure of $>^\lex$.

\begin{example}[continued from \cref{ex:plus2}]
\label{ex:plus3}
The three linear interpretations $\CC$, $\AA$, and $\BB$ can be
combined into the single algebra $\MM$ on $\NN^3$ with the interpretations:
\begin{align*}
\m{s}_\MM(\vec{x}) &= 
\begin{pmatrix}
1 & 0 & 0 \\
0 & 1 & 0 \\
0 & 0 & 1 \\
\end{pmatrix} \vec{x} +
\begin{pmatrix}
    0 \\
    1 \\
    1
\end{pmatrix}
&
\m{p}_\MM(\vec{x}) &=
\begin{pmatrix}
1 & 0 & 0 \\
0 & 1 & 0 \\
0 & 0 & 0 \\
\end{pmatrix} \vec{x}
&
\m{0}_\MM &= \vec{e}_1
\\
\vec{x} +_\MM \vec{y} &= 
\begin{pmatrix}
1 & 0 & 0 \\
0 & 1 & 0 \\
0 & 0 & 1 \\
\end{pmatrix} \vec{x} +
\begin{pmatrix}
1 & 0 & 0 \\
0 & 0 & 0 \\
0 & 0 & 0 \\
\end{pmatrix} \vec{y}
&
\vec{x} +^\sharp_\MM \vec{y} &= 
\begin{pmatrix}
1 & 0 & 0 \\
0 & 1 & 0 \\
0 & 0 & 1 \\
\end{pmatrix} \vec{x}
&
\m{p}_\MM^\sharp(\vec{x}) &= \vec{0}
\end{align*}
For instance, 
\(
\m{p}_\MM(\vec{x}) =
(x_1,x_2,0) =
(\m{p}_\CC(x_1),\m{p}_\AA(x_2),\m{p}_\BB(x_3))
\)
holds for $\vec{x} = (x_1,x_2,x_3)^T$.  If we 
equip $\MM$ with $\geqslant^\lex$,
the reduction pair $({\geqslant_\MM},{>_\MM})$ coincides with the
lexicographic combination employed in \cref{ex:plus2}.
\end{example}

Let $\MM$ be such a matrix interpretation equipped with the lexicographic
order.  Although $\MM$ is well-founded, it cannot afford weak monotonicity
for free (cf.~\cref{ex:lex-comb-redpair-invalid}),
which is a relevant property for $({\geqslant_\MM}, {>_\MM})$ to be a
reduction pair.  Below, we show that weak monotonicity is characterized by
\emph{(column) echelon-form} matrices.

Let $A$ be an $m \times n$ matrix.
The matrix $A$ is in \emph{(column) echelon form} if 
the following property holds for all $1 \leqslant i \leqslant m$ and 
$2 \leqslant j \leqslant n$:
If $A_{k, \, j-1} = 0$ for all $k < i$ then $A_{i, j} = 0$.
By definition, every echelon-form matrix is a lower-triangular matrix.  In
particular, $A_{1, j} = 0$ is enforced for every $2 \leqslant j$, as the
premise is vacuously satisfied.

\begin{example}
\label{ex:echelon}
The $2 \times 2$ and $3 \times 3$ echelon-form matrices are classified as follows:
\[
\begin{pmatrix} 0 & 0 \\ \mathord{*} & 0 \end{pmatrix}
\quad
\begin{pmatrix} \mathord{+} & 0 \\ \mathord{*} & \mathord{*} \end{pmatrix}
\quad
\begin{pmatrix}
    0 & 0  & 0 \\
    0 & 0  & 0 \\
    \mathord{*} & 0  & 0
\end{pmatrix}
\quad
\begin{pmatrix}
    0 & 0  & 0 \\
    \mathord{+} & 0  & 0 \\
    \mathord{*} & \mathord{*}  & 0
\end{pmatrix}
\quad
\begin{pmatrix}
    \mathord{+} & 0  & 0 \\
    \mathord{*} & 0  & 0 \\
    \mathord{*} & \mathord{*} & 0
\end{pmatrix}
\quad
\begin{pmatrix}
    \mathord{+} & 0  & 0 \\
    \mathord{*} & \mathord{+}  & 0 \\
    \mathord{*} & \mathord{*} & \mathord{*}
\end{pmatrix}
\]
Here $\mathord{+}$ stands for an arbitrary positive natural number, and 
$\mathord{*}$ for an arbitrary non-negative integer.
The matrices
$\begin{psmallmatrix} 0 & 0 \\ 0 & 1 \end{psmallmatrix}$ and 
$\begin{psmallmatrix}
    0 & 0  & 0 \\
    0 & 0  & 0 \\
    0 & 1  & 0
\end{psmallmatrix}$
are not in echelon form, so not all lower-triangular matrices are in
echelon form.
\end{example}

\begin{lemma}
\label{lem:echelon-weak-mono}
Let $A$ be an $m \times n$ matrix in echelon form.
Then $A \vec{x} \geqslant^\lex A \vec{y}$ whenever $\vec{x} \geqslant^\lex \vec{y}$.
\end{lemma}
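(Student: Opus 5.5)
If $\vec{x} = \vec{y}$ the claim is trivial, so I assume $\vec{x} >^\lex \vec{y}$. Let $p$ be the first index with $x_p \neq y_p$; then $x_j = y_j$ for $j < p$ and $x_p > y_p$. Writing $\vec{x} = \vec{y} + \vec{d}$ with $\vec{d} \in \ZZ^n$, I get $d_j = 0$ for $j<p$ and $d_p > 0$, while the entries $d_j$ for $j > p$ may have any sign. Since $A\vec{x} - A\vec{y} = A\vec{d}$, I must show that the first nonzero entry of $A\vec{d}$, if it exists, is positive. (Allowing negative integer entries in $\vec{d}$ is harmless, because only the difference vector matters for $\geqslant^\lex$.)

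Next I use the echelon structure on column $p$. Let $i_0$ be the least row index with $A_{i_0,p} > 0$, if one exists. I claim that for every $j > p$ and every $i \leqslant i_0$, $A_{i,j} = 0$. The argument is by induction on $j$, from $j = p+1$ upward.
\begin{itemize}
\item For $j = p+1$ and $i \leqslant i_0$: the entries $A_{k,p}$ with $k < i \leqslant i_0$ are all zero by minimality of $i_0$, so the echelon condition gives $A_{i,p+1} = 0$.
\item For $j > p+1$: by the induction hypothesis $A_{k,j-1} = 0$ for all $k \leqslant i_0$, hence in particular for all $k < i$ when $i \leqslant i_0$. The echelon condition then yields $A_{i,j} = 0$.
\end{itemize}
By the same minimality, $A_{i,p} = 0$ for $i < i_0$.

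Now consider $(A\vec{d})_i = \sum_{j \geqslant p} A_{i,j} d_j$.
\begin{itemize}
\item For $i < i_0$ every term vanishes, since $A_{i,p} = 0$ and $A_{i,j} = 0$ for $j > p$.
\item For $i = i_0$ only the $j = p$ term survives, giving $(A\vec{d})_{i_0} = A_{i_0,p} d_p > 0$.
\end{itemize}
Hence $A\vec{x} >^\lex A\vec{y}$.

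If no such $i_0$ exists, column $p$ is entirely zero. The same induction then applies with every row treated like $i \leqslant i_0$: at $j = p+1$ the premise ``$A_{k,p} = 0$ for all $k < i$'' holds for every $i$, so column $p+1$ is zero, and so on. Thus all columns $j \geqslant p$ are zero, $A\vec{d} = \vec{0}$, and $A\vec{x} = A\vec{y}$. In either case $A\vec{x} \geqslant^\lex A\vec{y}$.

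The main obstacle is the propagation step: showing that a zero prefix in column $p$ forces zeros in all later columns over the same row range, which is what lets me ignore the possibly negative $d_j$ for $j > p$. I will handle it by induction on the column index as above, carefully using the ``for all $k < i$'' form of the echelon condition at each step.
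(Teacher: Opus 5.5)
Your proof is correct, and it takes a different route from the paper's. The paper argues by induction on the size of $A$. It takes the first positive entry $A_{i,1}$ of the first column and observes that rows $1,\dots,i$ vanish outside column $1$. It then writes $A$ in block form with an echelon-form submatrix $A'$ on rows $i+1,\dots,m$ and columns $2,\dots,n$, splits on $x_1 > y_1$ versus $x_1 = y_1$, and recurses on $A'$ in the latter case.

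You instead jump directly to the first differing coordinate $p$ and work with the integer difference vector $\vec{d}$. You then do the zero-propagation once, by induction on the column index, for all rows up to and including the first nonzero row $i_0$ of column $p$. In effect this unrolls the paper's recursion into a single step.

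This buys you two things. First, you never need to check that $A'$ is again in echelon form, a step the paper leaves implicit. Second, you get a sharper statement: for $\vec{x} >^\lex \vec{y}$ first differing at $p$, you have $A\vec{x} >^\lex A\vec{y}$ exactly when column $p$ of $A$ is nonzero, and $A\vec{x} = A\vec{y}$ otherwise. A positive echelon-form matrix has $A_{p,p} > 0$ for all $p \leqslant n$, so your argument yields \cref{lem:positive-echelon-strict-mono} at no extra cost. The paper proves that lemma by a second block-decomposition induction.

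The paper's version, in turn, makes the recursive block shape of echelon-form matrices explicit. That shape mirrors the ``first component, then the rest'' reading of a lexicographic combination.
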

\begin{proof}
We proceed by mathematical induction on $m$.
If $n = 1$ then the claim is trivial, so assume $n > 1$
and let 
\(
\vec{x} = 
\begin{psmallmatrix} x_1 \\ \vec{x'} \end{psmallmatrix}
\)
and
\(
\vec{y} = 
\begin{psmallmatrix} y_1 \\ \vec{y'} \end{psmallmatrix}
\).
We further analyze the first column of $A$.
If $A_{i,1} = 0$ for all $1 \leqslant i < m$ (which is in particular true when $m = 1$),
from the echelon formedness it holds that
$A_{i, j} = 0$ except for $i = m$ and $j = 1$, and therefore
\[
A \vec{x} =
\begin{pmatrix}
\vec{0}\\
A_{m,1} x_1
\end{pmatrix}
\geqslant^\lex
\begin{pmatrix}
\vec{0}\\
A_{m,1} y_1
\end{pmatrix}
= A\vec{y}.
\]
Otherwise, 
by calculation, we have:
\begin{align*}
A & =
\begin{pmatrix}
    \vec{0} & & O         & \\
    A_{i,1} & & \vec{0}^T & \\
    \vec{a} & & A'        & \\
\end{pmatrix}
&
A \vec{x} & =
\begin{pmatrix}
\vec{0}\\
A_{i, 1} x_1\\
\vec{a} x_1 + A' \vec{x'} \\
\end{pmatrix}
&
A \vec{y} & =
\begin{pmatrix}
\vec{0}\\
A_{i, 1} y_1\\
\vec{a} y_1 + A' \vec{y'} \\
\end{pmatrix}
\end{align*}
Here $A_{i, 1}$ is the first positive entry at the $i$-th row with 
$1 \leqslant i < m$, 
$\vec{a}$ is a vector of length $m - i > 0$, and $A'$ is an $(m-i) \times (n-1)$ echelon-form matrix.
If $\vec{x} \geqslant^\lex \vec{y}$ is due to $x_1 > y_1$
then $A_{i,1} x_1 > A_{i, 1} y_1$ and therefore $A \vec{x} \geqslant^\lex A \vec{y}$.
Otherwise, $x_1 = y_1$ and $\vec{x'} \geqslant^\lex \vec{y'}$.
By the induction hypothesis $A' \vec{x'} \geqslant^\lex A' \vec{y'}$
and therefore $A \vec{x} \geqslant^\lex A \vec{y}$. \qed
\end{proof}

\begin{lemma}
\label{lem:weak-mono-echelon}
Let $A$ be an $m \times n$ matrix.  If $\vec{x} \geqslant^\lex \vec{y}$
implies $A \vec{x} \geqslant^\lex A \vec{y}$ for all vectors $\vec{x},
\vec{y}$ of natural numbers, then $A$ is in echelon form.
\end{lemma}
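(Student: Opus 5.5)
We argue by contraposition. Suppose $A$ is not in echelon form. Then there are indices $1 \leqslant i \leqslant m$ and $2 \leqslant j \leqslant n$ such that $A_{k,j-1} = 0$ for all $k < i$ but $A_{i,j} > 0$. We will exhibit natural-number vectors $\vec{x} \geqslant^\lex \vec{y}$ with $A\vec{x} \not\geqslant^\lex A\vec{y}$, which contradicts the hypothesis.

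The idea is to make $\vec{x}$ win lexicographically at coordinate $j-1$, while $\vec{y}$ is heavy at coordinate $j$. Column $j-1$ is invisible in rows $1,\ldots,i-1$, whereas column $j$ is visible in row $i$. Concretely, take
\[
\vec{x} = \vec{e}_{j-1}, \qquad \vec{y} = N\vec{e}_j
\]
for a large natural number $N$. Then $\vec{x} >^\lex \vec{y}$, because the two vectors agree on coordinates $1,\ldots,j-2$ (all zero) and $x_{j-1} = 1 > 0 = y_{j-1}$.

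Next we compute the images:
\[
A\vec{x} = (A_{1,j-1},\ldots,A_{m,j-1})^T, \qquad A\vec{y} = N\,(A_{1,j},\ldots,A_{m,j})^T.
\]
By choice of $i$, the first $i-1$ entries of $A\vec{x}$ are $0$. The $i$-th entry of $A\vec{x}$ is $A_{i,j-1}$, and the $i$-th entry of $A\vec{y}$ is $N A_{i,j} \geqslant N$. Choose $N > A_{i,j-1}$. There are two cases.
\begin{itemize}
\item If $A_{k,j} = 0$ for all $k < i$, then $A\vec{x}$ and $A\vec{y}$ agree on the first $i-1$ coordinates and $(A\vec{y})_i > (A\vec{x})_i$. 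Hence $A\vec{y} >^\lex A\vec{x}$.
\item If some $A_{k,j} > 0$ with $k < i$, take the least such $k$. The two images agree on coordinates $1,\ldots,k-1$, and $(A\vec{y})_k = N A_{k,j} > 0 = (A\vec{x})_k$. Again $A\vec{y} >^\lex A\vec{x}$.
\end{itemize}
In both cases the lexicographic order is total on $\NN^m$, so $A\vec{y} >^\lex A\vec{x}$ gives $A\vec{x} \not\geqslant^\lex A\vec{y}$. This contradicts the hypothesis applied to $\vec{x} \geqslant^\lex \vec{y}$, so $A$ is in echelon form.

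The main delicacy is choosing the witnesses so that the first differing row of the images is controlled. This is exactly where the echelon premise ``$A_{k,j-1} = 0$ for all $k < i$'' enters: it guarantees that $\vec{x}$ contributes nothing above row $i$. The scaling by $N$ then absorbs the contribution $A_{i,j-1}$ at row $i$. The rest is routine.
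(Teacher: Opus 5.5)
Your proof is correct and is essentially the paper's argument: it uses the same witnesses supported on coordinates $j-1$ and $j$, with the $j$-th coordinate scaled past $A_{i,j-1}$ (the paper takes $\vec{x} = A_{i,j}\vec{e}_{j-1}$ and $\vec{y} = (A_{i,j-1}+1)\vec{e}_j$). The only difference is bookkeeping: the paper rules out earlier positive entries in column $j$ by complete induction on $i$, whereas your second case handles them directly; also, your final step actually relies on asymmetry of $>^\lex$, not on its totality.
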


See \cref{sec:omitted} for the proof of \cref{lem:weak-mono-echelon}.
Let $\MM$ be an algebra whose carrier is the set of vectors of natural
numbers ordered lexicographically
and interpretations $f_\MM(\vec{x}_1, \ldots, \vec{x}_n) = A_1 \vec{x}_1 + \cdots + A_n \vec{x}_n + \vec{a}$
are built from echelon-form matrices $A_1, \ldots, A_n$.
We dub such an algebra \emph{echelon-form matrix interpretation}.

\begin{theorem}
\label{thm:echelon-redpair}
The pair $({\geqslant_\MM}, {>_\MM})$ is a reduction pair
for every echelon-form matrix interpretation $\MM$.
\end{theorem}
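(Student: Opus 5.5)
The plan is to use the general fact recalled in the preliminaries: $({\geqslant_\AA},{>_\AA})$ is a reduction pair whenever $\AA$ is a well-founded, weakly monotone algebra equipped with an order pair. So it suffices to check three things for $\MM$ with carrier $\NN^d$ and the pair $({\geqslant^\lex},{>^\lex})$. First, that this pair is an order pair. Second, that $>^\lex$ is well-founded on $\NN^d$. Third, that every $f_\MM$ is weakly monotone.

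The first two are standard. $\geqslant^\lex$ is the reflexive closure of the strict total order $>^\lex$, so it is a preorder (indeed a total order). Compatibility $a \geqslant^\lex b >^\lex c \geqslant^\lex d \implies a >^\lex d$ then follows from transitivity of $>^\lex$. Well-foundedness of the lexicographic product of $d$ copies of $(\NN,>)$ is the classical result, proved by induction on $d$: in a descending sequence the first components are non-increasing, hence eventually constant, and the tails then descend in $\NN^{d-1}$.

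The main step is weak monotonicity. Fix $f$ with $f_\MM(\vec{x}_1,\ldots,\vec{x}_n) = A_1\vec{x}_1+\cdots+A_n\vec{x}_n+\vec{a}$, and suppose $\vec{x}_i \geqslant^\lex \vec{y}$. By \cref{lem:echelon-weak-mono} applied to the echelon-form matrix $A_i$ we get $A_i\vec{x}_i \geqslant^\lex A_i\vec{y}$. It remains to see that adding a fixed vector $\vec{c} = \sum_{j\neq i} A_j\vec{x}_j + \vec{a}$ to both sides preserves $\geqslant^\lex$. This holds because the first differing index of $\vec{u}+\vec{c}$ and $\vec{v}+\vec{c}$ is the same as that of $\vec{u}$ and $\vec{v}$, with the same sign at that index. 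Hence $f_\MM(\ldots,\vec{x}_i,\ldots) \geqslant^\lex f_\MM(\ldots,\vec{y},\ldots)$.

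I expect no real obstacle here, since the hard work is contained in \cref{lem:echelon-weak-mono}. The only point needing care is translation invariance of $\geqslant^\lex$ under addition, which I would state explicitly. Combining the three facts with the quoted result gives that $({\geqslant_\MM},{>_\MM})$ is a reduction pair: stability under substitutions and closure of $\geqslant_\MM$ under contexts come from that generic theorem via the homomorphism $[\alpha]_\MM$.
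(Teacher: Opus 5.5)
Your proposal is correct and follows the paper's intended argument. The paper gives no separate proof; it obtains \cref{thm:echelon-redpair} from the preliminary fact that well-founded, weakly monotone algebras induce reduction pairs, with weak monotonicity supplied by \cref{lem:echelon-weak-mono} and well-foundedness coming from $>^\lex$ on $\NN^d$. Your explicit checks of the order-pair axioms and of translation invariance of $\geqslant^\lex$ fill in the small steps the paper leaves unstated.
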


As expected, a lexicographic combination of $d$ linear polynomial
interpretations (satisfying the combinability criterion) corresponds to a
$d$-dimensional echelon-form matrix interpretation whose non-diagonal
entries are all zero.

\begin{theorem}
\label{thm:lex-pol-echelon}
Let $d$ be a positive integer, and let $\AA_1, \ldots, \AA_d$ be linear
polynomial interpretations such that
$\AA_i$ and $\AA_{i+1}$ are combinable for all $1 \leqslant i < d$.
Let $({\geqslant}, {>})$ is the lexicographic combination of $\AA_1, \ldots, \AA_d$.
Then there is a $d$-dimensional echelon-form matrix interpretation $\MM$ such that
$({\geqslant, {>}})$ is identical to $({\geqslant_\MM}, {>_\MM})$.
\end{theorem}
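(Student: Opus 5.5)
The plan is to build $\MM$ on $\NN^d$ with diagonal matrices and then check two things: that the induced relations coincide with the lexicographic combination, and that every matrix used is in echelon form. For each $n$-ary symbol $f$, write
\[
f_{\AA_k}(x_1,\ldots,x_n) = a^{(k)}_0 + a^{(k)}_1 x_1 + \cdots + a^{(k)}_n x_n .
\]
Define
\[
f_\MM(\vec{x}_1,\ldots,\vec{x}_n) = D_1\vec{x}_1 + \cdots + D_n\vec{x}_n + \vec{a},
\]
where $D_j$ is the diagonal matrix with entries $a^{(1)}_j,\ldots,a^{(d)}_j$ and $\vec{a} = (a^{(1)}_0,\ldots,a^{(d)}_0)^T$. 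By induction on terms, $[\alpha]_\MM(t) = ([\alpha_1]_{\AA_1}(t),\ldots,[\alpha_d]_{\AA_d}(t))^T$, where $\alpha_k$ is the $k$-th projection of $\alpha$. This is the computation already illustrated in \cref{ex:plus3}.

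Next I show that $\MM$ is an echelon-form matrix interpretation. For a diagonal matrix $D$, the echelon condition only concerns entries $D_{i,j}$ with $j \geqslant 2$, and it is nontrivial only when $i = j$. It then reads: if $D_{j-1,j-1} = 0$, then $D_{j,j} = 0$. Equivalently, along the diagonal, once an entry is zero all later entries are zero. Combinability of $\AA_k$ with $\AA_{k+1}$ says that each position is $>_{\AA_k}$-monotone or $\geqslant_{\AA_{k+1}}$-invariant. For linear polynomial interpretations these mean $a^{(k)}_j > 0$ and $a^{(k+1)}_j = 0$ respectively. The second is the point needing care: if $a^{(k+1)}_j > 0$, then choosing $t_j = x$ and $u$ a variable with a large assignment violates invariance. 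So $a^{(k)}_j = 0$ implies $a^{(k+1)}_j = 0$, which is exactly the echelon condition.

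The main obstacle is showing that the $d$-fold lexicographic combination, defined by iterating \cref{def:lex-comb}, agrees with $(\geqslant_\MM,>_\MM)$. The difficulty is that the combination quantifies over assignments componentwise, while $\MM$ quantifies over a single vector assignment. For $>$, I induct on $d$ and reduce to the case of two pairs. I must show that $s >_{\AA_1} t$, or both $s \geqslant_{\AA_1} t$ and $s >_{\mathrm{rest}} t$, holds iff for all $\vec\alpha$ the vector of $s$ exceeds that of $t$ in $>^\lex$.

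The left-to-right direction is immediate, since the $k$-th component depends only on $\alpha_k$. For the converse, suppose neither disjunct holds. If $s \not\geqslant_{\AA_1} t$, pick $\alpha_1$ with $[\alpha_1]_{\AA_1}(s) < [\alpha_1]_{\AA_1}(t)$; then the first component already fails lexicographically. Otherwise $s \geqslant_{\AA_1} t$ but $s \not>_{\AA_1} t$, so some $\alpha_1$ gives equality in the first component. In that case the rest must fail for some $\alpha_2,\ldots,\alpha_d$ by the induction hypothesis. Because the components use independent assignments, these combine into one $\vec\alpha$ on which the lexicographic comparison fails. The same argument handles $\geqslant$, using that $\geqslant^\lex$ is the reflexive closure of $>^\lex$. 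Componentwise independence is essential here.

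With the identity established, note also that each $(\geqslant_{\AA_k},>_{\AA_k})$ is a reduction pair. This is consistent with \cref{thm:echelon-redpair} and \cref{thm:combinability}, though neither is needed for the identity itself.
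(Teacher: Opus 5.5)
Your diagonal construction and your argument that the iterated combination coincides with $({\geqslant_\MM},{>_\MM})$ are fine. Independence of the components is exactly what makes the global and the pointwise lexicographic comparisons agree. The binary combination of \cref{def:lex-comb} is associative, so the bracketing of the $d$-fold combination does not matter. This is also the construction the paper has in mind, cf.\ \cref{ex:plus3}.

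The gap is in the echelon-form step. You single out invariance as the delicate direction, but ``$\geqslant_{\AA_{k+1}}$-invariant iff $a^{(k+1)}_j=0$'' is correct as you argue. It is the monotone direction that fails. Only ``$a^{(k)}_j>0$ implies $>_{\AA_k}$-monotone'' holds unconditionally. Monotonicity is an implication quantified over all $t>_{\AA_k}u$, so it holds vacuously when $>_{\AA_k}=\emptyset$. This happens, for instance, whenever all constant parts $a^{(k)}_0$ are $0$ (evaluate under the zero assignment). It is precisely why \cref{lem:propagation,thm:propagation} carry non-emptiness hypotheses; see the example in \cref{sec:omitted}. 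In that situation combinability says nothing, and your diagonal matrix can be $\mathrm{diag}(0,1)$, which is not in echelon form.

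No other choice of $\MM$ rescues this. Take unary $\m{f},\m{h}$ with $\m{f}_{\AA_1}(x)=0$, $\m{h}_{\AA_1}(x)=x$, $\m{f}_{\AA_2}(x)=x$, $\m{h}_{\AA_2}(x)=x+1$. Then $>_{\AA_1}=\emptyset$, so $\AA_1$ is combinable with $\AA_2$. The combination satisfies
\begin{itemize}
\item $x\geqslant\m{f}(x)$,
\item $\m{f}(x)\not\geqslant x$,
\item $\m{f}(\m{f}(x))\geqslant\m{f}(x)\geqslant\m{f}(\m{f}(x))$,
\item $\m{f}(\m{h}(x))>\m{f}(x)$,
\item $\m{f}(\m{h}(x))\not\geqslant x$.
\end{itemize}
Suppose $\m{f}_\MM(\vec{x})=F\vec{x}+\vec{a}$ and $\m{h}_\MM(\vec{x})=H\vec{x}+\vec{b}$ were echelon on $\NN^2$ and realized this pair. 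The first relation forces $\vec{a}=\vec{0}$, and the third then forces $F^2=F$. By the echelon shape and the first two relations, $F\in\{O,\begin{psmallmatrix}1&0\\0&0\end{psmallmatrix}\}$. The fourth relation excludes $F=O$ and gives $(H\vec{x}+\vec{b})_1>x_1$ for all $\vec{x}$, contradicting the fifth.

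So the statement needs an extra hypothesis, such as $>_{\AA_k}\neq\emptyset$ for all $k<d$, or combinability read syntactically as ``$a^{(k)}_j>0$ or $a^{(k+1)}_j=0$''. Under that hypothesis your step is repaired as in \cref{lem:propagation}. Pick $t>_{\AA_k}u$. If $a^{(k)}_j=0$, then putting $t$ and $u$ at position $j$ of $f$ yields identical $\AA_k$-values. Hence the position is not $>_{\AA_k}$-monotone, so it is $\geqslant_{\AA_{k+1}}$-invariant, so $a^{(k+1)}_j=0$. The remainder of your proof then goes through unchanged.
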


To characterize strict monotonicity for $m \times n$ matrices $A$
(i.e., the property that $\vec{x} >^\lex \vec{y}$ implies $A \vec{x} >^\lex A \vec{y}$),
it suffices to additionally assume $n \leqslant m$ and
$A_{i, i} > 0$ for all $1 \leqslant i \leqslant n$.
Such an echelon-form matrix is called \emph{positive}.

\begin{example}
\label{ex:positive-echelon}
The echelon-form matrix 
$A = \begin{psmallmatrix} 1 & 0 \end{psmallmatrix}$
is not positive, while $A^T$ is.
Indeed, $A$ is not strictly monotone, as witnessed by
\(
A \begin{psmallmatrix}
0 \\ 1   
\end{psmallmatrix}
= \begin{psmallmatrix}
0 \\ 0
\end{psmallmatrix}
=
A \begin{psmallmatrix}
0 \\ 0
\end{psmallmatrix}
\).
\end{example}

\begin{lemma}
\label{lem:positive-echelon-strict-mono}
Let $A$ be an $m \times n$ positive echelon-form matrix.
Then $A \vec{x} >^\lex A \vec{y}$ whenever $\vec{x} >^\lex \vec{y}$.
\end{lemma}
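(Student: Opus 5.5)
The plan is to prove the claim directly, by locating the first index where $\vec{x}$ and $\vec{y}$ differ and then examining the rows of $A$ one at a time. Suppose $\vec{x} >^\lex \vec{y}$, and let $k$ be the index with $x_k > y_k$ and $x_j = y_j$ for all $j < k$. Because $A$ is positive, we have $n \leqslant m$ and $A_{i,i} > 0$ for every $i \leqslant n$; in particular row $k$ of $A$ exists.

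The first step is to record a structural fact that follows from echelon form together with positivity: $A_{i,j} = 0$ whenever $j > i$, so $A$ is lower-triangular. I would derive this from the definition by induction on $i$. For $i = 1$, the premise of the echelon condition is vacuous, so $A_{1,j} = 0$ for $j \geqslant 2$. For general $i$ the argument goes by induction on $j > i$. Take $j = i+1$: the entries $A_{k,i}$ with $k < i$ vanish by the outer induction hypothesis, since $i > k$, and the echelon condition then gives $A_{i,i+1} = 0$. Now take $j > i+1$: every $A_{k,j-1}$ with $k < i$ vanishes because $j - 1 > i > k$, so again $A_{i,j} = 0$. The paper already notes that echelon-form matrices are lower-triangular, so this step could also be cited rather than reproved.

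The second step is the row-by-row comparison. For a row $i < k$, lower-triangularity gives $(A\vec{x})_i = \sum_{j \leqslant i} A_{i,j} x_j$. Every such $j$ satisfies $j \leqslant i < k$, so $x_j = y_j$, and therefore $(A\vec{x})_i = (A\vec{y})_i$. For row $k$ we get
\[
(A\vec{x})_k - (A\vec{y})_k = A_{k,k}(x_k - y_k) > 0,
\]
because the terms with $j < k$ cancel, the terms with $j > k$ vanish, and $A_{k,k} > 0$. The first $k-1$ rows agree and row $k$ is strictly larger, so $A\vec{x} >^\lex A\vec{y}$ by the definition of $>^\lex$.

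I expect the only delicate point to be the triangularity step, that is, making sure the echelon condition really forces $A_{i,j} = 0$ for all $j > i$ and not only for some of these entries. That is what lets the coordinates of $\vec{x}$ beyond index $k$, which are arbitrary, leave rows $1$ to $k$ untouched. As a fallback, one could instead mirror the inductive proof of \cref{lem:echelon-weak-mono}: positivity ensures $A_{1,1} > 0$, so $A$ splits as $A_{1,1}$ in the top-left corner with zeros elsewhere in the first row, a column $\vec{a}$ below it, and a positive echelon block $A'$. One then inducts on $n$, treating the cases $x_1 > y_1$ and $x_1 = y_1$ separately.
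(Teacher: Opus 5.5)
Your proof is correct, but it takes a different route from the paper's. The paper inducts on $n$. It splits $A$ into the pivot $A_{1,1} > 0$ with zeros to its right, a column $\vec{a}$ below it, and a trailing block $A'$, which it asserts is again an $(m-1) \times (n-1)$ positive echelon-form matrix. It then separates the cases $x_1 > y_1$ and $x_1 = y_1$ and recurses on $A'$ in the latter. That is exactly your fallback.

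Your main argument unrolls this recursion into a single step. Your derivation of $A_{i,j} = 0$ for $j > i$ is sound. The inner induction on $j$ is never used, since the outer hypothesis already gives $A_{k,j-1} = 0$ from $j-1 > k$. With lower-triangularity in hand, the first index $k$ where $\vec{x}$ and $\vec{y}$ differ is also the first index where $A\vec{x}$ and $A\vec{y}$ differ, with difference $A_{k,k}(x_k - y_k) > 0$.

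Your version makes plain that only lower-triangularity and positivity of $A_{1,1}, \ldots, A_{n,n}$ are used. It also gives the sharper fact that the position of the first difference is preserved. And it avoids checking that $A'$ inherits positive echelon form, a step the paper asserts without justification.

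The paper's version buys uniformity with the proof of \cref{lem:echelon-weak-mono}. There the pivot of the first column may sit in any row, so the first differing index can shift, and peeling off blocks inductively is the natural tool. In the positive case the pivots lie on the diagonal, which is exactly why your direct argument goes through.
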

\begin{proof}
We proceed by mathematical induction on $n$.
The case $n = 1$ is trivial.
Otherwise, write $\vec{x}, \vec{y}$ and $A$ as follows:
\begin{align*}
\vec{x} &= \begin{pmatrix}
x_1 \\
\vec{x'}
\end{pmatrix}
&
\vec{y} &= \begin{pmatrix}
y_1 \\
\vec{y'}
\end{pmatrix}
&
A & =
\begin{pmatrix}
    A_{1,1} &       & \vec{0}^T &  \\
    \vec{a} &       & A'        &   \\
\end{pmatrix}
\end{align*}
Here $A_{1, 1}$ is the first positive entry, 
$\vec{a}$ is a vector of length $m - 1$,
and $A'$ is an $(m-1) \times (n-1)$ positive echelon-form matrix.
A calculation shows that
\begin{align*}
A \vec{x} & =
\begin{pmatrix}
A_{1, 1} x_1\\
\vec{a} x_1 + A' \vec{x'} \\
\end{pmatrix}
&
A \vec{y} & =
\begin{pmatrix}
A_{1, 1} y_1\\
\vec{a} y_1 + A' \vec{y'} \\
\end{pmatrix}
\end{align*}
If $\vec{x} >^\lex \vec{y}$ is by $x_1 > y_1$
then $A_{1,1} x_1 > A_{1, 1} y_1$ and therefore $A \vec{x} >^\lex A \vec{y}$.
Otherwise, $x_1 = y_1$ and $\vec{x'} >^\lex \vec{y'}$.
By the induction hypothesis $A' \vec{x'} >^\lex A' \vec{y'}$
and therefore $A \vec{x} >^\lex A \vec{y}$. \qed
\end{proof}

\begin{lemma}
\label{lem:strict-mono-positive-echelon}
Let $A$ be an $m \times n$ matrix.  If $\vec{x} >^\lex \vec{y}$
implies $A \vec{x} >^\lex A \vec{y}$ for all vectors $\vec{x},
\vec{y}$ of natural numbers, then $A$ is a positive echelon-form matrix.
\end{lemma}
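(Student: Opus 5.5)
Strict monotonicity implies weak monotonicity, since $\geqslant^\lex$ is the reflexive closure of $>^\lex$ and $A\vec{x} = A\vec{y}$ whenever $\vec{x} = \vec{y}$. Hence \cref{lem:weak-mono-echelon} already gives that $A$ is in echelon form. It remains to show that $n \leqslant m$ and that $A_{i,i} > 0$ for all $1 \leqslant i \leqslant n$. We prove this by contradiction, in the spirit of \cref{ex:positive-echelon}, by exhibiting $\vec{x} >^\lex \vec{y}$ with $A\vec{x} \leqslant^\lex A\vec{y}$.

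The key step is to establish $A_{i,i} > 0$ by induction on $i$ for all $i \leqslant \min(m,n)$. Suppose $A_{k,k} > 0$ for all $k < i$, and suppose for contradiction that $A_{i,i} = 0$.

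We first show that $A_{k,i} = 0$ for all $k \leqslant i$. For $k < i$ this is lower-triangularity, which holds for any echelon-form matrix. Now consider column $i+1$ (if $i < n$). The premise ``$A_{k,i} = 0$ for all $k < i+1$'' holds, so the echelon property gives $A_{i+1,i+1} = 0$. Iterating this argument shows that every later column vanishes entirely: the first $k$ rows of column $i$ are zero, and the echelon property propagates zeros to the right.

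The cleaner route is to compare $\vec{x} = \vec{e}_i$ with $\vec{y} = \vec{0}$, so that $\vec{x} >^\lex \vec{y}$ and strictness forces $A\vec{e}_i >^\lex \vec{0}$. The vector $A\vec{e}_i$ is column $i$ of $A$, whose entries in rows $1..i$ are zero. Suppose some entry below row $i$ is nonzero, say the first one at row $r > i$. Then choose
\[
\vec{x} = \vec{e}_{i-1}\,\cdot c \ \text{ vs. } \ldots,
\]
which is messier. Instead, we use $\vec{x} = \vec{e}_i$ and $\vec{y} = N\vec{e}_{i+1}$ when $i < n$. Then $\vec{x} >^\lex \vec{y}$ holds, and by echelon form column $i+1$ vanishes, since its premise (column $i$ zero in rows $<i+1$) is satisfied. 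Hence $A\vec{y} = \vec{0}$, and strictness still only gives $A\vec{e}_i >^\lex \vec{0}$, which is not yet a contradiction.

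So the real argument must compare vectors differing in the $i$-th coordinate against earlier coordinates. Take $\vec{x} = \vec{e}_{i}$ and $\vec{y} = \vec{0}$ when $i = 1$. Then $A\vec{e}_1$ has a zero first entry, and some lower entry is positive. To get a contradiction, take instead $\vec{x} = \vec{e}_1$ and $\vec{y} = \vec{0}$ inside a larger comparison, for instance $\vec{x} = \vec{e}_1$ against $\vec{y} = \vec{e}_2 \cdot N$ for large $N$ (if $n \geqslant 2$). This is useless, because column 2 vanishes. I therefore expect the genuine obstacle to lie here: showing that a zero diagonal entry with nonzero entries lower in the same column contradicts strictness.

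The natural approach is to use preceding coordinates. When $i > 1$, compare $\vec{x} = \vec{e}_{i-1} + \vec{0}$ with $\vec{y} = \vec{e}_{i-1}$ modified. The simplest witness is actually already at hand. If $A_{i,i}=0$, echelon form forces columns $i+1, \ldots, n$ to be zero. When $i < n$, the pair $\vec{e}_n >^\lex \vec{0}$ then has $A\vec{e}_n = \vec{0}$, contradicting strictness. When $i = n$, $A_{n,n}=0$ only matters if column $n$ has nonzero entries below row $n$. If it has none, we take $\vec{e}_n$ versus $\vec{0}$ for the contradiction. The remaining subcase, a zero diagonal entry $A_{n,n}$ with positive entries below, is allowed by echelon shape, and I would attack it first by checking against the definition of positivity whether it is actually required.

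Finally, for $n \leqslant m$: if $n > m$, column $m+1$ vanishes by echelon form (premise on rows $<m+1$, which is all rows, or by lower-triangularity). Then $\vec{e}_{m+1} >^\lex \vec{0}$ with $A\vec{e}_{m+1} = \vec{0}$ contradicts strictness.
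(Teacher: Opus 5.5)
You prove echelon form the same way the paper does: strict monotonicity implies weak monotonicity, then \cref{lem:weak-mono-echelon} applies. Your proof of $n \leqslant m$ is also correct and matches the paper: column $m+1$ is zero by lower-triangularity, so $A\vec{e}_{m+1} = \vec{0} = A\vec{0}$. The diagonal part, however, has a genuine gap in two places.

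First, your claim that $A_{i,i} = 0$ forces columns $i+1, \ldots, n$ to vanish is false. The echelon condition only makes column $i+1$ vanish down to the row of the first non-zero entry of column $i$. For example, $\begin{psmallmatrix} 0 & 0 \\ 1 & 0 \\ 0 & 1 \end{psmallmatrix}$ is in echelon form with $A_{1,1} = 0$, but its second column is non-zero.

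Second, the subcase you leave open ($A_{n,n} = 0$ with non-zero entries below it) cannot be closed, because it contains counterexamples to the statement itself. The $2 \times 1$ matrix $A = (0,1)^T$ is vacuously in echelon form, and $Ax = (0,x)^T >^\lex (0,y)^T = Ay$ whenever $x > y$, yet $A_{1,1} = 0$. The $3 \times 2$ matrix above is strictly monotone too, since it maps $\vec{x}$ to $(0,x_1,x_2)^T$. So your closing suspicion is justified: for $m > n$, the lemma is false when positivity means $A_{i,i} > 0$.

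The paper's proof has the same hole. It correctly reduces the claim to $A_{n,n} > 0$, using that echelon form gives $A_{i,i} > 0 \Rightarrow A_{i-1,i-1} > 0$. It then asserts that $A_{n,n} = 0$ implies $A\vec{e}_n = \vec{0}$, which tacitly assumes there are no rows below row $n$, i.e.\ $m = n$.

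What holds in general is that strict monotonicity is equivalent to echelon form plus a non-zero last column. In that case the first non-zero entries of the columns lie in rows $p_1 < \cdots < p_n \leqslant m$, which gives $n \leqslant m$ and $p_j \geqslant j$. Moreover, $A_{j,j} > 0$ for all $j$ holds exactly when every $p_j = j$, and this is forced if $m = n$.

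The square case is the one used for echelon-form matrix interpretations on $\NN^d$, and there your argument is easily repaired. If $A_{i,i} = 0$, then column $i$ has no non-zero entry in rows $1, \ldots, i$. By the echelon condition, every column $j \geqslant i$ then has no non-zero entry in rows $1, \ldots, j$. For $j = n = m$ this makes column $n$ zero, so $\vec{e}_n >^\lex \vec{0}$ with $A\vec{e}_n = A\vec{0}$ contradicts strictness.
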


See \cref{sec:omitted} for the proof of \cref{lem:strict-mono-positive-echelon}.
Now we can characterize monotone and invariant positions of echelon-form matrix
interpretations $\MM$. Let the interpretation of an $n$-ary function symbol $f$
be
\(
f_\MM(\seq{\vec{x}}) = A_1\vec{x}_1 + \cdots + A_n\vec{x}_n + \vec{a}
\).
The $i$-th argument position of $f$ is monotone if $A_i$ is positive,
and invariant if $A_i = O$.
Besides, $({\geqslant_\MM},{>_\MM})$ is normal.

We again tame the Hydra, using an echelon-form matrix interpretation which
has a positive non-diagonal entry and therefore goes beyond a combination
of linear polynomial interpretations.

\begin{example}
\label{ex:hydra-echelon}
Recall the TRS $\HH$ in \cref{sec:hydra}. Instead of the Knuth--Bendix
order, we establish the termination by the combination of the ordinal
interpretation $\OO$ with the following echelon-form matrix interpretation
$\MM$:
\begin{align*}
{\bullet}_\MM(\vec{x}) &=
\begin{pmatrix}
1 & 0 \\
1 & 1 
\end{pmatrix}\!\vec{x} +
\begin{pmatrix}
0 \\ 1
\end{pmatrix}
&
{\circ}_\MM(\vec{x}) &=
\begin{pmatrix}
2 & 0 \\
0 & 1 
\end{pmatrix}\!\vec{x} +
\begin{pmatrix}
3 \\ 0
\end{pmatrix}
&
{\talloblong}_\MM(\vec{x}) &=
\begin{pmatrix}
2 & 0 \\
0 & 1 
\end{pmatrix}\!\vec{x} +
\begin{pmatrix}
2 \\ 0
\end{pmatrix}
\end{align*}
The other interpretations $\m{0}_\MM$,
$\m{H}_\MM(\overline{\vec{x}}, \overline{\vec{y}})$,
$\m{c^1}_\MM(\overline{\vec{x}}, \overline{\vec{y}})$, and
$\m{c^2}_\MM(\overline{\vec{x}}, \overline{\vec{y}}, \overline{\vec{z}})$
are defined as the constant vector $\vec{0}$.
One can confirm the combinability,
$\RR \subseteq {>_{\OO\MM}}$, and
$\Emb \subseteq {\geqslant_{\OO\MM}}$.
For instance, the orientations
${\talloblong\,\circ}~x >_\MM {\circ\,\talloblong}~x$ and
${\bullet\,\talloblong}~x >_\MM {\talloblong\,\bullet\bullet}~x$
of the first two rules in $\HH$ are verified as follows:
\begin{align*}
1\colon
\begin{pmatrix}
4 & 0 \\
0 & 1 
\end{pmatrix}\!\vec{x} +
\begin{pmatrix}
8 \\ 0
\end{pmatrix}
&>^\lex
\begin{pmatrix}
4 & 0 \\
0 & 1 
\end{pmatrix}\!\vec{x} +
\begin{pmatrix}
7 \\ 0
\end{pmatrix}
&
2\colon
\begin{pmatrix}
2 & 0 \\
2 & 1 
\end{pmatrix}\!\vec{x} +
\begin{pmatrix}
2 \\ 3
\end{pmatrix}
&>^\lex
\begin{pmatrix}
2 & 0 \\
2 & 1 
\end{pmatrix}\!\vec{x} +
\begin{pmatrix}
2 \\ 2
\end{pmatrix}
\end{align*}
So, the termination of $\HH$ is again concluded.
\end{example}

The standard matrix interpretation with the component-wise order and our
echelon-form matrix interpretation are incomparable. 
For example, the termination of the TRS $\RR$
\begin{align*}
\m{f}(x,\m{s}(y),z) & \to \m{f}(x,y,\m{s}(\m{s}(z)))
&
\m{f}(\m{s}(x),y,z) & \to \m{f}(x,z,y)
&
\m{f}(x, y, z) & \to y
\end{align*}
can be shown by the following echelon-form matrix interpretation $\AA$:
\begin{align*}
\m{f}_\AA^{(\sharp)}(\vec{x},\vec{y},\vec{z}) &= 
\vec{x}
+
\begin{pmatrix}
1 & 0 & 0 \\
0 & 0 & 0 \\
0 & 1 & 0
\end{pmatrix} \vec{y}
+
\begin{pmatrix}
1 & 0 & 0 \\
0 & 0 & 0 \\
0 & 0 & 0
\end{pmatrix} \vec{z}
+ \vec{e}_1
&
\m{s}_\AA(\vec{x}) &=
\vec{x} +
\vec{e}_2
\end{align*}
It is easy to verify
$\DP(\RR) \subseteq {>_\AA}$ and $\RR \subseteq {\geqslant_\AA}$.
On the other hand, any standard matrix interpretation cannot satisfy them.
This is seen by a complexity consideration:
Let $[n] = \m{s}^n(x)$ and $t_n = \m{f}^{\sharp}(\m{f}([n], [1], [0]), [1], [0])$.
Then 
\[
t_n
\to^{*}_\RR
\m{f}^\sharp([2^n], [1], [0])
\to^{*}_{\DP(\RR)}
\m{f}^\sharp([0], [2^{2^n}], [0])
\to^{2^{2^n}}_{\DP(\RR)}
\m{f}^\sharp([0], [0], [2^{1 + 2^n}])
\]
holds for every $n \in \NN$.
Observe that the number of rewrite steps by $\DP(\RR)$ is double exponential.
However, the interpretation of $t_n$ cannot bound such a number from above;
see the proof of~\cite[Lemma~7]{EWZ08}.

For the converse, we consider the TRS
$\RR = \{ \m{f}(\m{a}) \to \m{f}(\m{b}), \m{g}(\m{b}) \to \m{g}(\m{a}) \}$.
Then 
\(
\DP(\RR) =
\{ \m{f}^\sharp(\m{a}) \to \m{f}^\sharp(\m{b}),
\m{g}^\sharp(\m{b}) \to \m{g}^\sharp(\m{a})
\}
\).
The next standard matrix interpretation $\AA$ 
satisfies 
$\RR \subseteq {\geqslant_{\AA}}$ and 
$\DP(\RR) \subseteq {>_{\AA}}$:
\begin{align*}
\m{a}_\AA &= \begin{pmatrix}
1 \\ 0
\end{pmatrix}
&
\m{b}_\AA &= \begin{pmatrix}
0 \\ 1
\end{pmatrix}
&
\m{f}_\AA(\vec{x}) &=
\begin{pmatrix}
1 & 1\\
0 & 0
\end{pmatrix} \vec{x}
&
\m{g}_\AA(\vec{x}) &=
\begin{pmatrix}
1 & 1\\
1 & 1
\end{pmatrix} \vec{x}
\\
& & & &
\m{f}^\sharp_\AA(\vec{x}) &=
\begin{pmatrix}
1 & 1\\
0 & 1
\end{pmatrix} \vec{x}
&
\m{g}^\sharp_\AA(\vec{x}) &=
\begin{pmatrix}
0 & 1\\
0 & 0
\end{pmatrix} \vec{x}
\end{align*}
Here, it is essential that $\m{a}$ and $\m{b}$ are incomparable with
respect to the underlying component-wise order.  Indeed, there is no
echelon-form matrix interpretation $\AA$ with 
$\RR \subseteq {\geqslant_\AA}$ and $\DP(\RR) \subseteq {>_\AA}$
due to totality of the lexicographic order: 
Any interpretation $\AA$ satisfies
$\m{a} \geqslant_\AA \m{b}$ or $\m{b} \geqslant_\AA \m{a}$.
If $\m{a} \geqslant_\AA \m{b}$ holds then 
$\m{g}^\sharp(\m{a}) \geqslant_\AA \m{g}^\sharp(\m{b})$ by monotonicity.
Similarly,
if $\m{b} \geqslant_\AA \m{a}$ then
$\m{f}^\sharp(\m{b}) \geqslant_\AA \m{f}^\sharp(\m{a})$. In
either case it contradicts to $\DP(\RR) \subseteq {>_\AA}$.

\section{Experiments}
\label{sec:experiments}

In order to evaluate the presented methods, we have implemented a prototype tool
for proving termination and relative termination of (finite) TRSs.
The tool uses the dependency
pair framework (\cref{thm:dp,thm:rp}) together with two standard refinements: an
iterative cycle analysis based on strongly connected components in dependency
graphs~\cite{AG00,GAO02,HM05}, the usable rule criterion~\cite{HM07},
and the rule removal method~\cite{TGS04} by monotone reduction pairs.
For relative termination, the tool uses the relative version
of~\cref{thm:dp} (a generalization of~\cref{thm:dp-relative}), which cannot be
used with the usable rule criterion due to lack of minimality,
see~\cite{INVY17}. We compare the three classes of reduction pairs and
their lexicographic combinations.
\begin{itemize}
\item
$\mb{L}$:
lexicographic path orders~\cite{KL80} with argument filtering.
\item
$\mb{E}_d$:
echelon-form matrix interpretations on $\NN^d$ with $0,1$-matrix coefficients
equipped with the lexicographic order~(\cref{sec:matrix}).
\item
$\mb{S}_d$:
matrix interpretations on $\NN^d$ with $0,1$-matrix coefficients
equipped with the standard component-wise order~\cite{EWZ08}.
Note that $\mb{S}_1$ is the same as $\mb{E}_1$.
\end{itemize}
Suitable precedences, argument filters, and interpretations are searched by
the SMT solver Z3~\cite{MB08}; see~\cite{CGST12,ZHM09} for the
SAT/SMT encoding techniques.  The experiments were run on a
computer with Intel Core i5-1340P CPU (4.6 GHz) and 8 GB memory with $60$ seconds timeout
for each (relative) termination problem.\footnote{%
The tool and the full experimental data are available at
\url{https://www.jaist.ac.jp/project/saigawa/25cade/}.}

\begin{table}[t]
\caption{Experiments on 1528 termination problems.}
\label{tbl:standard1}
\centering
\begin{tabular}{%
@{}l@{~~}
r@{~~}r@{~~}r@{~~}r@{~~}r@{~~}
r@{~~}r@{~~}r@{~~}r@{~~}r@{~~}
r@{~~}r@{~~}r@{}}
\toprule
& $\m{L}$
& $\m{LL}$
& $\m{LLL}$
& $\m{E_1}$
& $\m{E_2}$
& $\m{E_3}$
& $\m{E_4}$
& $\m{S_2}$
& $\m{S_2S_2}$
& $\m{S_2S_2S_2}$
& $\m{E_1E_1}$
& $\m{E_1L}$
& $\m{LE_1}$
\\
proved
& 372
& 389
& 389
& 464
& 562
& 566
& 564
& 593
& 619
& 617
& 506
& 496
& 406
\\
\textit{timeout}
& \textit{8}
& \textit{8}
& \textit{8}
& \textit{8}
& \textit{14}
& \textit{28}
& \textit{84}
& \textit{24}
& \textit{39}
& \textit{60}
& \textit{8}
& \textit{8}
& \textit{8}
\\
\bottomrule
\end{tabular}
\end{table}
\cref{tbl:standard1} summarizes the experimental results on 1528 termination problems
in the TRS Standard category of the Termination Problem Database~\cite{TPDB}.
For instance, the numbers in column $\mb{E_1L}$ are read as follows: In the
aforementioned setting, lexicographic combinations of $\mb{E_1}$ (linear
interpretations) with $\mb{L}$ (the lexicographic path order with argument
filtering) proved termination of 496 TRSs, while termination analysis on 8 TRSs
did not finish within 60 seconds.
In general, combination gives us more proofs.  For example, while
$\mb{L}$ and $\mb{E_1}$ produce 503 proofs in total, the union of $\mb{L}$,
$\mb{E_1}$, $\mb{LE_1}$, and $\mb{E_1L}$ amounts to 533 proofs.
The experimental results show that the echelon-form matrix interpretation
$(\mb{E_2})$ outperforms lexicographic combination of linear polynomials
($\mb{E_1E_1}$).
The union of all methods amounts to 649 proofs, which
include eight proofs missed by the state-of-the-art termination tool
\NaTT~\cite{YKS14} (version 2.3.2).  

\begin{table}[t]
\caption{Experiments on $57$ relative termination problems.}
\label{tbl:relative1}
\centering
\begin{tabular}{%
@{}l@{~~}
r@{~~}r@{~~}r@{~~}r@{~~}r@{~~}
r@{~~}r@{~~}r@{~~}r@{~~}r@{~~}
r@{~~}r@{~~}r@{}}
\toprule
& $\m{L}$
& $\m{LL}$
& $\m{LLL}$
& $\m{E_1}$
& $\m{E_2}$
& $\m{E_3}$
& $\m{E_4}$
& $\m{S_2}$
& $\m{S_2S_2}$
& $\m{S_2S_2S_2}$
& $\m{E_1E_1}$
& $\m{E_1L}$
& $\m{LE_1}$
\\
proved
& \phantom{00}4
& \phantom{0}20
& \phantom{0}22
& \phantom{00}8
& \phantom{0}43
& \phantom{0}45
& \phantom{0}47
& \phantom{0}10
& \phantom{0}47
& \phantom{0}47
& \phantom{0}41
& \phantom{0}30
& \phantom{0}27
\\
\textit{timeout}
& \textit{0}
& \textit{0}
& \textit{0}
& \textit{0}
& \textit{0}
& \textit{0}
& \textit{0}
& \textit{0}
& \textit{0}
& \textit{0}
& \textit{0}
& \textit{0}
& \textit{0}
\\
\bottomrule
\end{tabular}
\end{table}
\cref{tbl:relative1} summarizes the experimental results on relative
termination.  The TRS Relative category in the database contains 57
relative termination problems where
\cref{thm:dp-relative} is applicable.
Two of these problems are known to be relatively non-terminating.
The union of $\mb{E_4}$ and $\mb{LL}$ amounts to 50 proofs.
We note that the union includes 5 problems missed by
the 2022 version of \NaTT, and 11 missed by
another powerful termination tool AProVE~\cite{GAB+17,KVG24}.
The following example (\texttt{INVY\_15/\#3.42}) is one of the 11 problems.\footnote{%
This problem can be solved by \NaTT.}

\begin{example}
\label{ex:relative}
Consider the relative termination problem 
of $\RR/\SS$. Here $\RR$ consists of the rules
\begin{align*}
\m{half}(\m{0}) &\to \m{0}
&
\m{lastbit}(\m{0}) &\to \m{0}
\\
\m{half}(\m{s}(\m{0})) &\to \m{0}
&
\m{lastbit}(\m{s}(\m{0})) &\to \m{s}(\m{0})
\\
\m{half}(\m{s}(\m{s}(x))) &\to \m{s}(\m{half}(x))
&
\m{lastbit}(\m{s}(\m{s}(x))) &\to \m{lastbit}(x)
\\
\m{conv}(\m{0}) &\to  \m{cons}(\m{nil},\m{0})
\\
\m{conv}(\m{s}(x)) &\to \makebox[0mm][l]{$\m{cons}(\m{conv}(\m{half}(\m{s}(x))),\m{lastbit}(\m{s}(x)))$}
\end{align*}
and $\SS = \{ \m{rand}(x) \to x,\; \m{rand}(x) \to \m{rand}(\m{s}(x)) \}$.
The set $\DP(\RR)$ consists of 
\begin{align*}
\m{half}^\sharp(\m{s}(\m{s}(x))) &\to \m{half}^\sharp(x)
&
\m{conv}^\sharp(\m{s}(x)) &\to \m{conv}^\sharp(\m{half}(\m{s}(x)))
\\
\m{lastbit}^\sharp(\m{s}(\m{s}(x))) &\to \m{lastbit}^\sharp(x)
&
\m{conv}^\sharp(\m{s}(x)) &\to \m{half}^\sharp(\m{s}(x))
\\
&&
\m{conv}^\sharp(\m{s}(x)) &\to \m{lastbit}^\sharp(\m{s}(x))
\end{align*}
The following echelon-form matrix interpretation $\AA$ on $\NN^3$ 
satisfies $\DP(\RR) \subseteq {>_\AA}$ and
$\RR \cup \SS \subseteq {\geqslant_\AA}$.
\begin{align*}
\m{0}_\AA &= \vec{0}
&
\m{nil}_\AA &= \vec{0}
&
\m{lastbit}_\AA(\vec{x}) &= \vec{e}_1
&
\m{conv}_\AA(\vec{x}) &= \vec{0}
\\
\m{s}_\AA(\vec{x}) &= \vec{x} + \vec{e}_2
&
\m{cons}_\AA(\vec{x},\vec{y}) &= \vec{0}
&
\m{lastbit}_\AA^\sharp(\vec{x}) &= \vec{x}
&
\m{conv}^\sharp_\AA(\vec{x}) &= \vec{x} + \vec{e}_1
\\
\m{half}_\AA(\vec{x}) &= 
\makebox[4em][l]{$\begin{pmatrix} 1 & 0 & 0 \\ 0 & 1 & 0 \\ 0 & 0 & 0
\end{pmatrix} \vec{x}$}
&
\m{half}^\sharp_\AA(\vec{x}) &= \vec{x}
&
\m{rand}_\AA(\vec{x}) &= 
\makebox[0mm][l]{$\begin{pmatrix} 1 & 0 & 0 \\ 0 & 0 & 0 \\ 0 & 0 & 0 
\end{pmatrix} \vec{x} + \vec{e}_1$}
\end{align*}
Since $\RR$ dominates $\SS$ and $\SS$ is non-duplicating, the above
inclusions together with \cref{thm:dp-relative} entail the termination of $\RR/\SS$.
\end{example}

\begin{remark}
\label{rem:usable-rule}
The particular usefulness of our lexicographic combination in relative termination can be
explained in terms of the usable rule criterion,
which gives fewer constraints
on the quasi-order $\geqslant$ of a reduction pair $({\geqslant}, {>})$.
As mentioned earlier in this section, the technique cannot be applied
with \cref{thm:dp-relative} for relative termination, due to absence of minimality of dependency pair problems.
In contrast, \cref{thm:combinability} can be used with \cref{thm:dp-relative} (as it does not rely on minimality)
and allows us to ignore some rules in latter components of lexicographic combination,
provided that the combinability condition is met.
\end{remark}

Further experimental comparison with the weighted path order~\cite{YKS15} and the max/plus interpretation
is found in \cref{sec:wpo-maxplus}.

\section{Conclusion}
\label{sec:conclusion}

We have presented a simple criterion for combining reduction pairs based on
monotone and invariant positions. By examples and experiments, the
criterion is shown to be complementary to existing methods of lexicographic
combination of reduction pairs.  In particular, the experiments show that
state-of-the-art tools may benefit from our method.
We have also elucidated when the matrix interpretation with the lexicographic order
induces a reduction pair.  We conclude the paper by stating related work and
future work.

Our combinability criterion (\cref{thm:combinability}) is inspired by
Touzet's work~\cite{T98}. 
It is easy to confirm the precise correspondence between
her original interpretation $\AA$ and \cref{ex:hydra-echelon}.
For instance,
${\bullet}_\AA((x,m,n)) = (x,m,m+n+1)$ corresponds to 
${\bullet}_\OO(x) = x$ and
\(
{\bullet}_\MM(\begin{psmallmatrix}m \\ n\end{psmallmatrix}) = \begin{psmallmatrix}
1 & 0 \\
1 & 1
\end{psmallmatrix} \begin{psmallmatrix}m \\ n\end{psmallmatrix} + 
\begin{psmallmatrix}
0 \\ 1
\end{psmallmatrix}
\).
We anticipate that the use of lexicographic combination eases termination
analysis of challenging rewrite systems such as Goodstein 
sequences~\cite{ZWM15}.

The ordinal interpretation in a certain form corresponds to
lexicographic combination of the linear polynomial interpretation.
To see this, consider linear polynomial interpretations
\begin{align*}
f_\AA(\seq{x}) &= \text{$a_0 + \sum_i a_i x_i$}
&
f_\BB(\seq{y}) &= \text{$b_0 + \sum_i b_i y_i$}
\end{align*}
with $a_i, b_i \in \NN$ and the interpretation $f_\OO$ on
ordinal numbers below $\omega^2$ given by:
% TODO: resolve virtical overfull here
\[
f_\OO(\omega x_1 + y_1, \ldots, \omega x_n + y_1)
= \omega (a_0 + a_1 x_1 + \cdots + a_n x_n) + (b_0 + b_1 y_1 + \cdots + b_n y_n)
\]
Here $x_i$ and $y_i$ range over $\NN$. Then, 
$({\geqslant_\OO}, {>_\OO})$ is order isomorphic to
$({\geqslant_{\AA\BB}}, {>_{\AA\BB}})$.
In general, an $n$-times combination of the linear polynomial interpretation
corresponds to an ordinal interpretation below $\omega^n$.
The same can be said for the echelon-form matrix interpretation via
\cref{thm:lex-pol-echelon}.
For instance, the echelon-form matrix interpretation $\m{rand}_\AA$ of \cref{ex:relative}
corresponds to
\(\m{rand}_\OO(\omega^2 x_1 + \omega x_2 + x_3)  = \omega^2 (x_1 + 1) \)
with $x_1, x_2, x_3 \in \NN$.  Actually, it is equivalent to
$\m{rand}_\OO(x) = x + \omega^2$ for $x < \omega^3$.

\cref{thm:lex-pol-echelon} states a correspondence between the echelon-form
matrix interpretation and a special class of lexicographic combination with
linear polynomials.  A possible line of future work is to extend this
result to a broader class of lexicographic combination.

Adapting the echelon-form matrix interpretation for AC termination is
another direction for future work.  As shown in~\cite{BL87,L79}, lexicographic
combination of \emph{non-linear} polynomial interpretations is an
effective proof method for AC termination.  We believe that the
non-linear matrix interpretation~\cite{CGO10}, using matrices
instead of vectors, is a key for the work.
Speaking of matrix, theoretical and experimental comparison of existing
matrix methods (including~\cite{KW08,NM11}) to ours is yet to be done.  In
particular, we anticipate that derivational complexity is useful to
distinguish the powers of the matrix-based methods.

\begin{credits}
\subsubsection{\ackname}
We are grateful to the anonymous reviewers for the valuable comments and
suggestions.
This research was supported by JST SPRING Grant Number JPMJSP2102 and JSPS
KAKENHI Grant Numbers JP22K11900 and JP25KJ1363.
\subsubsection{\discintname}
The authors have no competing interests to declare that are
relevant to the content of this article.
\end{credits}

\bibliographystyle{splncs04}
\bibliography{references}

\appendix
\section{Omitted Examples and Proofs}
\label{sec:omitted}

The next example shows that 
the normality requirement cannot be dropped from the combinability
criterion (\cref{thm:combinability}).

\begin{example}
\label{ex:normality}
Consider the linear polynomial interpretation $\AA$ 
defined by $\m{a}_\AA = 1$, $\m{b}_\AA = 0$, and
$\m{f}_\AA(x) = 0$. With using the identity relation $=$, the
reduction pairs $({=}, {>_\AA})$ and $({\geqslant_\AA}, {>_\AA})$
satisfy all conditions for combinability 
except normality, as witnessed by
$\m{a} >_\AA \m{b}$ and $\m{a} \neq \m{b}$.
Indeed, the preorder $\geqslant$ of the lexicographic
combination is not closed under contexts, because $\m{a} \geqslant \m{b}$
from $\m{a} >_\AA \m{b}$ but not $\m{f}(\m{a}) \geqslant \m{f}(\m{b})$, as
neither $\m{f}(\m{a}) >_\AA \m{f}(\m{b})$ nor $\m{f}(\m{a}) = \m{f}(\m{b})$
holds.
\end{example}

The next example shows that 
the non-emptiness requirement cannot be dropped from \cref{thm:propagation}.

\begin{example}
Consider the following linear polynomial interpretations $\AA$ and $\BB$:
\begin{align*}
\m{f}_\AA(x) &= 0
&
\m{a}_\AA &= 1
&
\m{b}_\AA &= 0
&\qquad
\m{f}_\BB(x) &= 0 
&
\m{a}_\BB &= 0
&
\m{b}_\BB &= 0
\end{align*}
When the signature consists of the only three symbols $\m{f}$, $\m{a}$,
and $\m{b}$, the relation $>_\BB$ is the empty relation.  Therefore, the
first argument position of $\m{f}$ is monotone with respect to $>_\BB$.  However, the
position is not monotone with respect to the strict order $>_{\AA\BB}$ of the
lexicographic combination, as $>_{\AA\BB}$ degenerates to $>_\AA$.
\end{example}

\begin{proof}[of~\cref{lem:weak-mono-echelon}]
We show that every element $A_{i, j}$ with $2 \leqslant j \leqslant n$
satisfies the following property: If $A_{k, \, j-1} = 0$ for all $k < i$
then $A_{i, j} = 0$.  We proceed by complete induction on $i$
while fixing $j$ with $2 \leqslant j \leqslant n$.  Suppose
$A_{i', j-1} = 0$ for all $i' < i$.  By the induction hypothesis
$A_{i', j} = 0$ for all $i' < i$, too. Assume to the contrary 
$A_{i, j} > 0$.  For the vectors $\vec{x} = A_{i,j}\vec{e}_{j-1}$ and
$\vec{y} = (A_{i,j-1} + 1)\vec{e}_j$ the inequality 
$\vec{x} \geqslant^\lex \vec{y}$ holds. We proceed as follows:
\[
\vec{x} \geqslant^\lex \vec{y}
\implies A\vec{x} \geqslant^\lex A\vec{y}
\implies A_{i,j-1} A_{i,j} \geqslant A_{i,j-1} A_{i,j} + A_{i,j}
\implies 0 \geqslant A_{i,j}
\]
Thus, $0 \geqslant A_{i,j} > 0$ is obtained. Contradiction.
\qed
\end{proof}

\begin{proof}[of~\cref{lem:strict-mono-positive-echelon}]
From \cref{lem:weak-mono-echelon}, we know that $A$ is in echelon form.
If $n > m$, then the right-most column of $A$ is all zero,
which implies $A \vec{e}_n = \vec{0} = A \vec{0}$, a contradiction.
So $n \leqslant m$.
Notice that from the echelon formedness,
$A_{i, i} > 0$ implies $A_{(i-1), (i-1)} > 0$ for all $2 \leqslant i \leqslant n$.
So, it suffices to see $A_{n, n} > 0$,
which is again shown by contradiction:
If $A_{n, n} = 0$, then $A \vec{e}_n = \vec{0} = A \vec{0}$. \qed
\end{proof}

We also note that an echelon-form square matrix $A$ is positive if and only
if it satisfies \emph{weak simplicity} $A \vec{x} \geqslant^\lex \vec{x}$,
% for if-direction, just feed unit vectors.
which is a relevant property for the weighted path order~\cite{YKS15}.

\section{Detailed Analysis of the Hydra Battle}
\label{sec:hydra-analysis}

We show that the termination of the TRS $\HH$ cannot be shown by
\cref{thm:dp,thm:rp} 
if we employ Touzet's ordinal interpretation $\OO$ and KBO
with argument filtering, in the following setting: We first extend $\OO$ to
the signature with marked symbols by $f^\sharp_\OO = f_\OO$, and then apply
\cref{thm:dp,thm:rp} to $(\DP(\HH), \HH)$
with the reduction pair $({\geqslant_\OO}, {>_\OO})$.
The resulting dependency pair problem is shown finite if we find a reduction pair $({\geqslant}, {>})$ satisfying:
\begin{alignat*}{4}
1\colon &\;&
{\talloblong\,\circ}~x &\geqslant {\circ\,\talloblong}\,x
&\hspace{4em}
6\colon &\;&
\m{H}(\m{0},x) &\geqslant {\circ}~x
\\
2\colon &&
{\bullet\,\talloblong}\,x &\geqslant {\talloblong\bullet\bullet}~x
&
7\colon &&
\bullet~\m{H}(\m{H}(\m{0},y),z) &\geqslant \m{c^1}(y,z)
\\
3\colon &&
{\circ}~x &\geqslant {\bullet\,\talloblong}\,x
&
8\colon &&
{\bullet}~\m{H}(\m{H}(\m{H}(\m{0},x),y),z) &\geqslant
\m{c^2}(x,y,z)
\\
4\colon &&
{\bullet}~x &\geqslant x
&
9\colon &&
{\bullet}~\m{c^1}(x,y) &\geqslant \m{c^1}(x,\m{H}(x,y))
\\
5\colon &&
\m{c^1}(y,z) &\geqslant {\circ}~z
&
10\colon &&
{\bullet}~\m{c^2}(x,y,z) &\geqslant \m{c^2}(x,\m{H}(x,y),z)
\\
&&&
&
11\colon &&
\m{c^2}(x,y,z) &\geqslant {\circ}~\m{H}(y,z)
\\
1a\colon &\;&
{\talloblong^\sharp\,\circ}~x &> {\circ^\sharp\,\talloblong}\,x
&\qquad
7a\colon &\;&
\bullet^\sharp~\m{H}(\m{H}(\m{0},y),z) &> \m{c^1}^\sharp(y,z)
\\
1b\colon &&
{\talloblong^\sharp\,\circ}~x &> {\talloblong^\sharp}\,x
&
8a\colon &&
{\bullet^\sharp}~\m{H}(\m{H}(\m{H}(\m{0},x),y),z) &>
\m{c^2}^\sharp(x,y,z)
\\
2a\colon &&
{\bullet^\sharp\,\talloblong}\,x &> {\talloblong^\sharp\bullet\bullet}~x
&
9a\colon &&
{\bullet^\sharp}~\m{c^1}(x,y) &> \m{c^1}^\sharp(x,\m{H}(x,y))
\\
2b\colon &&
{\bullet^\sharp\,\talloblong}\,x &> {\bullet^\sharp\bullet}~x
&
9b\colon &&
{\bullet^\sharp}~\m{c^1}(x,y) &> \m{H}^\sharp(x,y)
\\
2c\colon &&
{\bullet^\sharp\,\talloblong}\,x &> {\bullet^\sharp}~x
&
10a\colon &&
{\bullet^\sharp}~\m{c^2}(x,y,z) &> \m{c^2}^\sharp(x,\m{H}(x,y),z)
\\
3a\colon &&
{\circ^\sharp}~x &> {\bullet^\sharp\,\talloblong}\,x
&
10b\colon &&
{\bullet^\sharp}~\m{c^2}(x,y,z) &> \m{H}^\sharp(x,y)
\\
3b\colon &&
{\circ^\sharp}~x &> {\talloblong^\sharp}\,x
\end{alignat*}
Here, the dependency pairs from rules 5, 6 and 11 are already removed by $>_\OO$.

\begin{proposition}
\label{prop:hydra-kbo}
There are no KBO and argument filter $\pi$ satisfying the constraints
$1\,\text{--}\,11$ and $1a\,\text{--}\,10b$ (even if quasi-precedence is allowed for KBO).
\end{proposition}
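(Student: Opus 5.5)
The plan is an exhaustive case analysis on the argument filter $\pi$, with each branch closed by one of two basic facts about $>_\kbo$ and its reflexive closure.

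\emph{Variable condition.} If $\hat{\pi}(s) \geqslant_\kbo \hat{\pi}(t)$, then every variable occurs in $\hat{\pi}(s)$ at least as often as in $\hat{\pi}(t)$.

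\emph{No decrease to a superterm.} A term is never $>_\kbo$ a term that strictly contains it. This holds for weight reasons, and the case $w(f)=0$ is covered by admissibility.

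Both facts survive the move to a quasi-precedence, because only the precedence comparison in case 2b changes. So I argue uniformly for quasi-precedences. For every symbol $f$ there are three shapes of $\pi(f)$ per argument: collapsing ($\pi(f)=i$), keeping the argument, or dropping it. I expect that most combinations die immediately on one of the two facts.

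First I fix the unary symbols, in the following order.

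\begin{enumerate}[1.]
\item Constraint 4, ${\bullet}\,x \geqslant x$, rules out $\pi(\bullet)=\emptylist$ by the variable condition.
\item Constraint 2b, ${\bullet^\sharp\,\talloblong}\,x > {\bullet^\sharp\bullet}\,x$, rules out $\pi(\bullet^\sharp)=\emptylist$, since both sides would become the same constant. It therefore reduces to $\hat{\pi}(\talloblong\,x) >_\kbo \hat{\pi}(\bullet\,x)$, either directly when $\bullet^\sharp$ collapses or by the lexicographic step for equal roots.
\item Combining this with the two options left for $\bullet$: if $\pi(\bullet)=1$ we need $\hat{\pi}(\talloblong\,x) > x$, which forces $\pi(\talloblong)=[1]$. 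If $\pi(\bullet)=[1]$, then $\pi(\talloblong)=1$ fails by the second fact and $\pi(\talloblong)=\emptylist$ fails by the variable condition. So in every case $\pi(\talloblong)=[1]$.
\item Constraint 3, ${\circ}\,x \geqslant {\bullet\,\talloblong}\,x$, then excludes $\pi(\circ)\in\{1,\emptylist\}$, so $\pi(\circ)=[1]$.
\item When $\pi(\bullet)=[1]$, comparing the weights of both sides of constraint 2 gives $w(\bullet)+w(\talloblong) \geqslant w(\talloblong)+2w(\bullet)$. Hence $w(\bullet)=0$, the two sides have equal weight, and admissibility makes $\bullet$ maximal in the precedence.
\end{enumerate}

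The remaining work concerns $\m{H}$, $\m{c^1}$ and $\m{c^2}$, and this is where I expect the real obstacle. The unary rules alone are satisfiable, so the contradiction must come from the ordinal part of the encoding.

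The key step is constraint 10, ${\bullet}\,\m{c^2}(x,y,z) \geqslant \m{c^2}(x,\m{H}(x,y),z)$. Because $\bullet$ is now weightless or collapsed, the left side contributes nothing beyond $\m{c^2}(x,y,z)$, while the right side replaces $y$ by $\m{H}(x,y)$. I split on whether $\pi(\m{c^2})$ keeps its second argument.

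\emph{If $\m{c^2}$ keeps its second argument.} The variable condition on $x$, together with the weight comparison (in which $\m{H}(x,y)$ must not outweigh $y$), forces $\pi(\m{H})=2$. Constraint 6 then reads $x \geqslant {\circ}\,x$, which contradicts $\pi(\circ)=[1]$.

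\emph{If $\m{c^2}$ drops its second argument.} Constraint 11, $\m{c^2}(x,y,z) \geqslant {\circ}\,\m{H}(y,z)$, forces $\pi(\m{H})$ to drop its first argument, again by the variable condition. I then continue in the same way through constraints 6, 8, 8a, 10b and 11 to pin down $\pi(\m{H})$ and $\pi(\m{c^2})$ further.

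The symmetric argument with constraints 9, 9b, 7, 7a and 5 handles $\m{c^1}$. I intend to close the remaining branches by the same mechanism: constraints 9b and 10b require the marked $\bullet^\sharp$-terms to strictly dominate $\m{H}^\sharp(x,y)$, which conflicts with the weights and variable counts already forced.

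The main risk is that some branch survives the variable and weight arguments and needs a finer precedence argument. I would treat that with the case 2b comparison under a quasi-precedence, and if necessary I would check the leftover branches mechanically.
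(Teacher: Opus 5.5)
\textbf{There is a genuine gap: the proposal never reaches a contradiction.} Your steps 1--5 are correct. They give $1\in\pi(\bullet)$, $1\in\pi(\bullet^\sharp)$, $\pi(\talloblong)=\pi(\circ)=[1]$, and, when $\pi(\bullet)=[1]$, $w(\bullet)=0$ with $\bullet$ maximal in the precedence.

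The only branch you close explicitly rests on a false claim. In case A, the variable condition and the weight comparison on constraint 10 do not force $\pi(\m{H})=2$. Take $\pi(\bullet)=[1]$, $w(\bullet)=0$, $\pi(\m{c^2})=[1,2,3]$ and $\pi(\m{H})=[2]$ with $w(\m{H})=0$. Then 10 becomes $\bullet\,\m{c^2}(x,y,z)\geqslant_\kbo\m{c^2}(x,\m{H}\,y,z)$, whose two sides have identical variable counts and identical weight. In fact $\pi(\m{H})=2$ is already ruled out by constraint 6 alone, so the values that need an argument are $[2]$ and $[1,2]$. Case A can be closed another way: the variable condition on 10 excludes $[1,2]$, and with $\pi(\m{H})=[2]$ the filtered left-hand side of 8 contains only $z$ while its right-hand side contains $y$.

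Case B and the $\m{c^1}$ part are only sketched. The closing mechanism you announce cannot work: $\pi(\m{H}^\sharp)$ is independent of $\pi(\m{H})$, so filtering $\m{H}^\sharp$ to a constant makes 9b and 10b trivially orientable.

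\textbf{The missing idea.} Variable counts, the subterm property and closure under contexts only pin the filter down, to $\pi(\m{H})=\pi(\m{c^1})=[2]$, $\pi(\m{c^2})=[3]$ and $\pi(\bullet)=[1]$. Beyond that point they yield no contradiction; it has to come from weights and admissibility. Once $\pi(\m{c^2})=[3]$, your key constraint 10 degenerates to $\bullet\,\m{c^2}\,z\geqslant\m{c^2}\,z$. The informative constraint is 9, since $\m{c^1}$ must keep the argument containing $\m{H}$.

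The paper argues as follows:
\begin{enumerate}[1.]
\item Constraints 5, 6, 7 and 9 force $\pi(\m{c^1})=\pi(\m{H})=[2]$.
\item If $\pi(\bullet)=1$, then 9 reads $\m{c^1}\,y\geqslant_\kbo\m{c^1}\,\m{H}\,y>_\kbo\m{c^1}\,y$, which is impossible. So $\pi(\bullet)=[1]$ and $w(\bullet)=0$.
\item The weight inequalities of 9, 7, 5 and 3 then give in turn $w(\m{H})=0$, $w(\m{c^1})=0$, $w(\circ)=0$ and $w(\talloblong)=0$.
\item Constraint 2b, which you already reduced to $\talloblong\,x>_\kbo\bullet\,x$, now compares two terms of equal weight. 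It therefore requires $\talloblong\succ\bullet$, contradicting the maximality of $\bullet$ you derived from admissibility in step 5.
\end{enumerate}
Without this weight-collapse step, your case analysis does not end in a contradiction.
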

\begin{proof}
Assume to the contrary that the constraints are satisfied.
Since $s \geqslant_\kbo^\pi t$ implies
$|\hat{\pi}(s)|_x \geqslant |\hat{\pi}(t)|_x$
for all $x \in \VV$, we can deduce
\begin{enumerate}[(i)]
\item
$1 \in \pi(\bullet)$ from $4$,
\item
$1 \in \pi(\talloblong^\sharp)$ and $\pi(\circ) = [1]$ from 
$1b$,
\item
$1 \in \pi(\bullet^\sharp)$ and $\pi(\talloblong) = [1]$ from 
$2c$, and
\item
$\pi(\m{c}^1) = \pi(H) = [2]$ from 
$5$, $6$, $7$, and $9$.
\end{enumerate}
If $\pi(\bullet) = 1$ then $9$ yields
$\m{c^1}\, y \geqslant_\kbo \m{c^1}\,\m{H} \, y >_\kbo \m{c^1} \, y$,
which leads to a contradiction. So $\pi(\bullet) = [1]$ holds.
Then $2$, $5$, $7$, and $9$ are expressed as follows:
\begin{align*}
{\bullet\,\talloblong}\,x &\geqslant_\kbo {\talloblong\bullet\bullet}~x
&
\m{c^1} \, z &\geqslant_\kbo {\circ} \, z
&
\bullet~\m{H} \, z &\geqslant_\kbo \m{c^1} \, z
&
{\bullet}~\m{c^1} \, y &\geqslant_\kbo^\pi \m{c^1}\m{H} \, y
\end{align*}
Their weight conditions impose
$w(\bullet) = w(\m{H}) = w(\m{c^1}) = w(\circ) = 0$.
Therefore, $\talloblong \succ \bullet$ follows from $2b$.
However, it contradicts the admissibility condition of the KBO.
\qed
\end{proof}

Similarly, we can show that \cref{cor:rp} is not applicable.

The following proposition states that the consequence does not change even
if we use a recursive path order (see e.g.~\cite[Section~6.4.1]{TeReSe} for
the definition).

\begin{proposition}
\label{prop:hydra-rpo}
There are no recursive path order $({\geqslant_\rpo}, {>_\rpo})$ and argument filter $\pi$ satisfying 
constraints $1\,\text{--}\,11$ and $1a\,\text{--}\,10b$ (even if
quasi-precedence is allowed).
\end{proposition}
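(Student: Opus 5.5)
The plan is to follow the proof of \cref{prop:hydra-kbo}: first use the variable condition to pin down the argument filter, then derive two contradictory requirements on the (quasi-)precedence. Assume to the contrary that a recursive path order, with a possibly quasi-precedence $\succsim$, and an argument filter $\pi$ satisfy $1\,\text{--}\,11$ and $1a\,\text{--}\,10b$.

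The first step uses the fact that $s \geqslant_\rpo t$ implies that every variable of $t$ occurs in $s$. Applied to the filtered rules, this gives:
\begin{enumerate}[(i)]
\item $1 \in \pi(\bullet)$, from $4$;
\item $2 \in \pi(\m{c^1})$, from $5$;
\item $2 \in \pi(\m{H})$, from $6$;
\item $1 \in \pi(\bullet^\sharp)$, from $2c$.
\end{enumerate}
Constraint $1b$ additionally forces $\pi(\circ) = [1]$. Indeed, if $\pi(\circ) = 1$ or $\pi(\circ) = \emptylist$, the two sides of $1b$ become equal or the variable $x$ disappears from the left-hand side, so the strict inequality cannot hold. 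In the same way, $2c$ forces $\pi(\talloblong) = [1]$.

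The second step rules out $\pi(\bullet) = 1$. In that case rule $9$ filters to
\[
\m{c^1}(\ldots,y) \geqslant_\rpo \m{c^1}(\ldots,\m{H}(\ldots,y)),
\]
where the first-argument parts of the two sides (if kept) coincide. Since $2 \in \pi(\m{H})$, the subterm property gives $\m{H}(\ldots,y) >_\rpo y$. Since $2 \in \pi(\m{c^1})$ and $>_\rpo$ is closed under contexts, the right-hand side is strictly greater than $\m{c^1}(\ldots,y)$. This yields $\m{c^1}(\ldots,y) >_\rpo \m{c^1}(\ldots,y)$, contradicting irreflexivity. Hence $\pi(\bullet) = [1]$.

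The third step extracts the precedence constraints from rule $2$ and $2b$. After filtering, $2b$ reads $\bullet^\sharp(\talloblong x) >_\rpo \bullet^\sharp(\bullet x)$ if $\pi(\bullet^\sharp) = [1]$, or $\talloblong x >_\rpo \bullet x$ if $\pi(\bullet^\sharp) = 1$. Since the symbols are unary, status does not matter, and both cases reduce to $\talloblong x >_\rpo \bullet x$. As $x \geqslant_\rpo \bullet x$ fails, this needs $\talloblong \succ \bullet$ strictly. Rule $2$ reads $\bullet\talloblong x \geqslant_\rpo \talloblong\bullet\bullet x$. The option $\talloblong x \geqslant_\rpo \talloblong\bullet\bullet x$ fails, because $\bullet\bullet x >_\rpo x$ and monotonicity would give $\talloblong\bullet\bullet x >_\rpo \talloblong x$. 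So the roots must satisfy $\bullet \succsim \talloblong$. If $\bullet \succ \talloblong$, this contradicts $\talloblong \succ \bullet$ immediately. If $\bullet \sim \talloblong$, comparing arguments needs $\talloblong x \geqslant_\rpo \bullet\bullet x$, which again requires $\talloblong \succ \bullet$, contradicting $\bullet \sim \talloblong$. Either way we reach a contradiction.

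The main obstacle is the quasi-precedence case $\bullet \sim \talloblong$, where RPO compares arguments of equivalent symbols. I would settle it by noting that the symbols are unary, so both lexicographic and multiset status reduce to comparing the single arguments. The argument comparison then needs $\talloblong \succ \bullet$, which is incompatible with $\bullet \sim \talloblong$.
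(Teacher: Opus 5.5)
Your overall strategy works and is close to the paper's, but Step~2 has a small genuine gap. From rule~$6$ you only record $2 \in \pi(\m{H})$, and in the paper's notation this also covers the collapsing filter $\pi(\m{H}) = 2$. In that case $\hat{\pi}(\m{H}(x,y)) = y$. With $\pi(\bullet) = 1$, rule~$9$ then filters to $\hat{\pi}(\m{c^1}(x,y)) \geqslant_\rpo \hat{\pi}(\m{c^1}(x,y))$, which holds trivially. Your appeal to the subterm property ``$\m{H}(\ldots,y) >_\rpo y$'' silently assumes that $\m{H}$ is not collapsed, so $\pi(\bullet) = [1]$ does not follow as written.

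The fix is one line, and it is exactly the paper's item (v). If $\pi(\m{H}) = 2$, rule~$6$ becomes $x \geqslant_\rpo {\circ}\,x$ (using $\pi(\circ) = [1]$), which is impossible because ${\circ}\,x >_\rpo x$. A collapsing $\pi(\m{c^1}) = 2$ does no harm to your argument, and rule~$5$ excludes it in the same way anyway.

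With that repaired, your proof is correct. It differs from the paper mainly in the order of the case split on $\pi(\bullet)$:
\begin{itemize}
\item The paper notes that rule~$2$ alone cannot be oriented by RPO once $\pi(\bullet) = \pi(\talloblong) = [1]$. Both $\bullet \succ \talloblong$ and $\bullet \sim \talloblong$ eventually require $\talloblong\,x >_\rpo \bullet\,x$ or $\talloblong\,x \geqslant_\rpo \bullet\bullet\,x$, hence $\talloblong \succ \bullet$. So the paper obtains $\pi(\bullet) = 1$ and then refutes rule~$9$.
\item You instead use rule~$9$ to force $\pi(\bullet) = [1]$. You then need the extra dependency pair $2b$ to produce $\talloblong \succ \bullet$, mirroring the KBO proof.
\end{itemize}
Both routes are sound; the paper's uses one constraint fewer. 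Your final case distinction between $\bullet \succ \talloblong$ and $\bullet \sim \talloblong$ is unnecessary, since $\bullet \succsim \talloblong$ already contradicts $\talloblong \succ \bullet$.
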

\begin{proof}
Assume to the contrary that the constraints are satisfied.  
If $s \geqslant_\rpo^\pi t$ then all variables in $\hat{\pi}(t)$ occur in
$\hat{\pi}(s)$.  So we can deduce
\begin{enumerate}[(i)]
\item
$1 \in \pi(\bullet)$ from $4$,
\item
$1 \in \pi(\talloblong^\sharp)$ and $\pi(\circ) = [1]$
from $1b$,
\item
$1 \in \pi(\bullet^\sharp)$ and $\pi(\talloblong) = [1]$
from $2c$,
\item
$\pi(\bullet) = 1$ from $2$, and
\item
$\pi(\m{c}^1),\pi(H) \in \{[2], [1, 2]\}$
from $5$ and $6$.
\end{enumerate}
From these we can see that 9 cannot be satisfied. Contradiction.
\qed
\end{proof}

\section{Weighted Path Order and Max/Plus Interpretations}
\label{sec:wpo-maxplus}

In this section we discuss how the techniques of this paper can be used
for the weighted path order (WPO)~\cite{YKS15}, which is a key ingredient
of the termination tool \NaTT~\cite{YKS14}.

Let $\AA$ be an algebra with a non-empty carrier $A$. Assume that $\AA$
is equipped with an order pair $({\geqslant},{>})$. An $i$-th argument
position of $f$ is said to be
\begin{itemize}
\item
\emph{weakly simple} if 
$f_\AA(a_1, \ldots, a_i, \ldots, a_n) \geqslant a_i$ for all
$\seq{a}$; and
\item
\emph{strictly simple} if
$f_\AA(a_1, \ldots, a_i, \ldots, a_n) > a_i$ for all $\seq{a}$.
\end{itemize}
A \emph{partial status} $\pi$ maps an $n$-ary function symbol $f$ to a
subset of $\{ 1, \ldots, n \}$.\footnote{%
This is a simplified version of partial status, see~\cite{YKS15,TSSY20} for
more general versions.} If $\pi(f) = \{ i_1, \ldots, i_n \}$ with 
$i_1 \leqslant \cdots \leqslant i_n$ then
$\pi(f)(t_1, \ldots, t_m) = (t_{i_1}, \ldots, t_{i_n})$.
We say that $\AA$ is \emph{weakly (resp. strictly) $\pi$-simple}
if the argument position $i$ is weakly (resp. strictly) simple for all
function symbols $f$ and $i \in \pi(f)$. Finally,
$\AA$ is \emph{trivial} if the carrier $A$ is a singleton set.

\begin{definition}
Let $\pi$ be a partial status, $\AA$ an algebra, and $\succeq$ a precedence (a quasi-order on function symbols).
The \emph{weighted path order} $({\geqslant_\wpo}, {>_\wpo})$ is a pair of relations
on terms defined simultaneously: $s \geqslant_\wpo t$ if
\begin{enumerate}[1.]
\item
$s >_\AA t$, or
\item
$s \geqslant_\AA t$ and one of the following conditions holds:
\begin{enumerate}[a.]
\item
$s = f(\seq[m]{s})$ and $s_i \geqslant_\wpo t$ for some $i \in
\pi(f)$.
\item
$s = f(\seq[m]{s})$, $t = g(\seq[n]{t})$, $s >_\wpo t_j$ for all 
$j \in \pi(g)$, and
\begin{enumerate}[i.]
\item
$f \succ g$ or
\item
$f \succeq g$ and
$\pi(f)(\seq[m]{s}) \geqslant_\wpo^\lex \pi(g)(\seq[n]{t})$.
\end{enumerate}
\item
$s \in \VV$ and either $s = t$ or $t = g(\seq[n]{t})$, $\pi(g) = \emptyset$
and $g$ is least in $\succeq$.
\item
$s = f(\seq[m]{s})$, $t \in \VV$, $\AA$ is strictly simple with respect to
$\pi$, and for all function symbols $g$, either $f \succ g$ or $f \sim g$
and $\pi(g) = \emptyset$ holds.
\end{enumerate}
\end{enumerate}
Here $\succ$ is the strict part of $\succeq$ and
$\sim$ is the equivalence induced from $\succeq$. Moreover,
$\geqslant^\lex$ is the lexicographic extension
of $\geqslant$, see~\cite{YKS15}.
The relation $>_\wpo$ is defined by cases (1), (2a) and (2b)
with $\geqslant_\wpo^\lex$ replaced by $>_\wpo^\lex$.
\end{definition}

\begin{theorem}[\textnormal{\cite{YKS15,TSSY20}}]
Let $\pi$ be a partial status, $\succeq$ a precedence,
and $\AA$ an algebra that is well-founded, non-trivial, weakly $\pi$-simple and weakly monotone.
Then $({\geqslant_\wpo}, {>_\wpo})$ is a reduction pair.
\end{theorem}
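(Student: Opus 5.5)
The plan is to reprove the standard facts about path orders, carrying the algebra component along. The key design feature is that every rule deriving $s \geqslant_\wpo t$ or $s >_\wpo t$ contains $s \geqslant_\AA t$ or $s >_\AA t$. So I would first record that $s \geqslant_\wpo t \implies s \geqslant_\AA t$ and ${>_\wpo} \subseteq {\geqslant_\wpo}$; both follow by a direct inspection of the definition. Next I would establish three basic properties, each by structural induction:
\begin{itemize}
\item \emph{Reflexivity} of $\geqslant_\wpo$, via case 2b(ii) with $f \sim f$ and the lexicographic extension, together with case 2c for variables.
\item The \emph{weak subterm property} $f(\ldots,s_i,\ldots) \geqslant_\wpo s_i$ for $i \in \pi(f)$, using weak $\pi$-simplicity for the $\geqslant_\AA$ part and case 2a with reflexivity.
\item Its strengthening: $s_i \geqslant_\wpo t$ with $i \in \pi(f)$ implies $f(\ldots,s_i,\ldots) \geqslant_\wpo t$, again via 2a. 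Here the $\geqslant_\AA$ side condition follows from simplicity and transitivity of $\geqslant_\AA$.
\end{itemize}

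Then I would prove \emph{transitivity} of $\geqslant_\wpo$ and the compatibility laws ${\geqslant_\wpo}\cdot{>_\wpo} \subseteq {>_\wpo}$ and ${>_\wpo}\cdot{\geqslant_\wpo} \subseteq {>_\wpo}$ simultaneously. The argument is an induction on $|s|+|t|+|u|$ for $s \mathrel{R_1} t \mathrel{R_2} u$, with a case analysis over which clause (1, 2a--2d) justifies each step. The algebra component is handled by the order pair $({\geqslant_\AA},{>_\AA})$: if either step is strict in $\AA$, clause 1 applies at once. Otherwise both steps are of type 2, and the cases proceed as in the classical proof for RPO/LPO: 2a on the left is pushed into the argument by induction, and 2b/2b combines the precedence and the lexicographic extension. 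The variable clauses 2c/2d need separate care. For them, non-triviality and strict simplicity are what guarantee that a term above a variable dominates every minimal constant-like term.

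\emph{Stability} I would show by induction on $t$. The interpretation relations are stable because they quantify over all assignments, and the syntactic clauses are preserved by substitution. For clause 2d, the strong precedence condition plus strict simplicity lets us derive $f(\ldots)\sigma \geqslant_\wpo x\sigma$ by repeatedly descending with 2a and 2b. \emph{Closure under contexts} for $\geqslant_\wpo$: given $s_i \geqslant_\wpo t_i$, weak monotonicity of $\AA$ gives $\geqslant_\AA$. If $i \in \pi(f)$ I would use 2b(ii) with $f \sim f$ and the lexicographic extension. If $i \notin \pi(f)$, the $\pi$-projections coincide, so reflexivity applies. In both cases the side condition $s >_\wpo t_j$ for $j \in \pi(f)$ comes from the strengthened subterm property above.

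The main obstacle is \emph{well-foundedness} of $>_\wpo$. I would argue by contradiction with a computability-style induction. Call a term SN if it admits no infinite $>_\wpo$-chain. I would show that $f(s_1,\ldots,s_n)$ is SN whenever all $s_i$ with $i \in \pi(f)$ are SN, by a well-founded induction on the triple
\[
\bigl([\alpha]_\AA(f(\ldots)) \text{ w.r.t. } >_\AA,\ f \text{ w.r.t. the well-founded part of } \succ,\ \pi(f)(s_1,\ldots,s_n) \text{ w.r.t. } (>_\wpo \text{ restricted to SN})^\lex\bigr),
\]
using compatibility to handle $\geqslant_\AA$-equal steps. The delicate point is that interpretations must be compared uniformly over assignments, and that the lexicographic extension is only well-founded on SN arguments. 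Both are handled by the usual trick of restricting to successors: a successor $t$ of $s$ is either strictly smaller in $\AA$, or reached via 2a (then $t$ is dominated by an SN argument, hence SN), or via 2b, whose arguments are SN by the induction hypothesis applied to $s >_\wpo t_j$.
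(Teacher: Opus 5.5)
The paper does not prove this theorem; it imports it from the cited works. Your architecture is the standard one: basic properties, transitivity and compatibility by induction on $|s|+|t|+|u|$, stability, context closure, and a computability-style well-foundedness argument. However, several steps rely on hypotheses that you invoke in the wrong place or not at all.

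\textbf{Non-triviality is used in a different case than the one you name.} In transitivity, the variable case that needs care is clause 2c followed by clause 2d: $x \geqslant_\wpo g(t_1,\ldots,t_n) \geqslant_\wpo y$, where $g$ is least, $\pi(g)=\emptyset$, and $g$ meets the side condition of 2d. The only way to conclude $x \geqslant_\wpo y$ is clause 2c with $x = y$. Clause 1 would need $a > b$ for all $a,b \in A$, and 2a, 2b, 2d need a non-variable left side. So you must show that $x \geqslant_\AA y$ forces $x = y$. This is exactly where non-triviality enters. It works only if it means that $\geqslant$ is not all of $A \times A$; a non-singleton carrier suffices only when $\geqslant$ is antisymmetric. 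For a counterexample, take $A=\{0,1\}$ with $\geqslant$ total and $>$ empty, all statuses empty, all symbols equivalent, and a constant $\m{c}$. Then $x \geqslant_\wpo \m{c} \geqslant_\wpo y$, yet not $x \geqslant_\wpo y$. The case your sentence addresses, $s \geqslant_\wpo x \geqslant_\wpo \m{c}$, needs neither non-triviality nor strict simplicity; it is a single 2b step, since $\m{c}$ is least with empty status.

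\textbf{Stability of 2d is a single 2b step, not a descent through 2a.} Instantiating the side condition of 2d with $g := f$ forces $\pi(f)=\emptyset$, so 2a is unavailable. For $x\sigma = g(u_1,\ldots,u_k)$, one 2b step gives $s\sigma \geqslant_\wpo x\sigma$. Its obligations $s\sigma >_\wpo u_j$ hold by clause 1, because $s\sigma \geqslant_\AA x\sigma >_\AA u_j$ by strict simplicity.

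\textbf{Normality of $\AA$ must be assumed.} The implication $s \geqslant_\wpo t \Rightarrow s \geqslant_\AA t$ does not follow by inspection: clause 1 only gives $s >_\AA t$. You need ${>} \subseteq {\geqslant}$ on the carrier, which the paper's definition of order pair does not include. Your context-closure step depends on it too, since weak monotonicity is only stated for $\geqslant$. The hypothesis cannot be dropped. On $\NN$ with the order pair $({=},{>})$, set $\m{a}_\AA = 1$, $\m{b}_\AA = 0$, $\m{f}_\AA(0)=1$, $\m{f}_\AA(n)=0$ for $n>0$, and use empty statuses. Then $\m{a} \geqslant_\wpo \m{b}$ holds, but $\m{f}(\m{a}) \geqslant_\wpo \m{f}(\m{b})$ does not.

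\textbf{Well-foundedness needs two repairs.} First, a successor $t = g(t_1,\ldots,t_m)$ with $s >_\AA t$ falls under your induction hypothesis only once the $t_j$ with $j \in \pi(g)$ are known to be SN. Obtain this from weak simplicity: $s >_\AA t \geqslant_\AA t_j$ gives $s >_\wpo t_j$. Then use an inner induction on the size of the successor, as in your 2b case. Second, there is no ``well-founded part of $\succ$'' to fall back on. You need $\succ$ to be well-founded and arities to be bounded, for instance by a finite signature. Otherwise a $\succ$-descending sequence of equally interpreted constants is already an infinite $>_\wpo$-chain.
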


In \cite{YKS15} it is shown that, if $\pi(f) = \emptyset$ and 
$f \succeq g$ for all function symbols $f$ and $g$, then
$({\geqslant_\wpo}, {>_\wpo})$ is identical to $({\geqslant_\AA},
{>_\AA})$.  
In such a setting weak and strict $\pi$-simplicities trivially holds.
So the weighted path
order subsumes all interpretation-based reduction pairs.

In order to use \cref{thm:combinability} with WPO, monotone and invariant
argument positions have to be identified. The next proposition can be used
for this purpose.

\begin{proposition}
\label{prop:wpo-monotone-invariant}
Let $\pi$ be a partial status, $\succeq$ a precedence, and $\AA$ an algebra.
For the weighted path order $({\geqslant_\wpo}, {>_\wpo})$,
an argument position $1 \leqslant i \leqslant n$ is monotone
if $i \in \pi(f)$;
similarly, $i$ is invariant if $i \notin \pi(f)$ and $i$ is an invariant position of $f$ with respect to $\AA$.
\end{proposition}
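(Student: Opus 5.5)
Both claims are proved directly from the definition of WPO, using the standing assumptions of the preceding theorem: $\AA$ is weakly $\pi$-simple and weakly monotone, and $({\geqslant_\wpo},{>_\wpo})$ is a reduction pair. In particular $\geqslant_\wpo$ is closed under contexts and reflexive, and $>_\wpo$ is compatible with $\geqslant_\wpo$. Throughout, write $s = f(t_1,\ldots,t_i,\ldots,t_n)$ and $t = f(t_1,\ldots,u,\ldots,t_n)$.

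\emph{Monotonicity for $i \in \pi(f)$.} Assume $t_i >_\wpo u$; we want $s >_\wpo t$.
\begin{enumerate}[1.]
\item Every strict WPO comparison entails the weak comparison in $\AA$, so $t_i >_\wpo u$ gives $t_i \geqslant_\AA u$. Weak monotonicity of $\AA$ then yields $s \geqslant_\AA t$.
\item If in fact $s >_\AA t$, case (1) applies and we are done. So assume only $s \geqslant_\AA t$ and aim for case (2b)(ii), using $f \succeq f$.
\item Lexicographic part: $\pi(f)(t_1,\ldots,t_i,\ldots,t_n) >^\lex_\wpo \pi(f)(t_1,\ldots,u,\ldots,t_n)$. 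The two tuples agree at all entries before position $i$ (reflexivity of $\geqslant_\wpo$), and at position $i$ we have $t_i >_\wpo u$.
\item Side condition of (2b): $s >_\wpo t_j$ for every $j \in \pi(f)$. For $j \neq i$, the argument $t_j$ is itself an argument of $s$ with $j \in \pi(f)$. For $j = i$ the target is $u$, and $t_i >_\wpo u$ with $t_i$ an argument of $s$.
\end{enumerate}
Step~4 is the main obstacle, since it needs the subterm property $s >_\wpo t_j$ (and $s >_\wpo u$ from $t_i >_\wpo u$). Case (2a) is stated only for $\geqslant_\wpo$, so it does not give this directly. I would first establish, by a small induction or by citing the standard WPO lemmas of~\cite{YKS15}, that $f(\ldots,t_j,\ldots) >_\wpo v$ whenever $t_j \geqslant_\wpo v$ and $j \in \pi(f)$. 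This uses weak $\pi$-simplicity to get $s \geqslant_\AA t_j$, and compatibility with $>_\wpo$ to handle the case $t_j >_\wpo u$ with $j = i$.

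\emph{Invariance for $i \notin \pi(f)$, with $i$ an $\AA$-invariant position of $f$.} Invariance in $\AA$ gives $s \geqslant_\AA t$, so case (2) is available. Since $i \notin \pi(f)$, the tuples $\pi(f)(\ldots)$ of $s$ and $t$ coincide, and reflexivity gives $\geqslant^\lex_\wpo$ between them. The side condition $s >_\wpo t_j$ for $j \in \pi(f)$ follows from the same subterm lemma as in Step~4. Hence case (2b)(ii) yields $s \geqslant_\wpo t$, for arbitrary $u$, which is exactly invariance.
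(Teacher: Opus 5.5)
The paper states this proposition without proof. Your argument is the natural direct verification and is correct: for monotonicity, case~(1), or case~(2b)(ii) with $f \succeq f$ and a $>^\lex_\wpo$-step at the position of $t_i$; for invariance, case~(2b)(ii) with identical $\pi(f)$-tuples. You are right to import weak monotonicity and weak $\pi$-simplicity of $\AA$ from the preceding theorem. The proposition as literally stated omits them, and it fails without them.

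However, your ``main obstacle'' rests on a misreading. The paper defines $>_\wpo$ by cases (1), (2a) and (2b), so (2a) \emph{is} available for the strict relation. Consequently:
\begin{itemize}
\item $s >_\wpo t_j$ for $j \in \pi(f)$ follows at once from $s \geqslant_\AA t_j$ (weak $\pi$-simplicity) and $t_j \geqslant_\wpo t_j$ via (2a);
\item $s >_\wpo u$ follows from $s \geqslant_\wpo t_i >_\wpo u$ by compatibility of the order pair.
\end{itemize}
No induction or external lemma is needed. The point matters: if (2a) really were excluded from $>_\wpo$, your auxiliary lemma would be false. For instance, $f(x) >_\wpo x$ could then only come from case~(1), i.e.\ strict simplicity.

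A smaller point: Step~1 tacitly uses ${>} \subseteq {\geqslant}$ on the carrier of $\AA$ (normality) when it passes from $t_i >_\AA u$ to $t_i \geqslant_\AA u$. Normality is not part of the paper's notion of order pair. It holds for every algebra the paper actually uses, but you can avoid it. If $t_i >_\wpo u$ holds by case~(1), then $t_i \geqslant_\wpo u$ by case~(1) of $\geqslant_\wpo$. Closure of $\geqslant_\wpo$ under contexts then gives $s \geqslant_\wpo t$, i.e.\ $s >_\AA t$ or $s \geqslant_\AA t$, which is exactly the dichotomy you use in Step~2.
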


Note that the sufficient condition above for monotone positions is an
under-approximation, in particular when $({\geqslant_\wpo}, {>_\wpo})$ is
identical to $({\geqslant_\AA}, {>_\AA})$.
The invariance condition of $\AA$ cannot be dropped. To see it, 
consider the polynomial interpretation $\m{f}_\AA(x) = x + 1$ and the
partial status $\pi(\m{f}) = \emptyset$.  The first argument position of
$\m{f}$ is not invariant, as $\m{f}(\m{f}(x)) >_\wpo \m{f}(x)$ follows
from $\m{f}(\m{f}(x)) >_\AA \m{f}(x)$.

We also need to state criteria to detect weakly/strictly simple positions
of algebras to use them with WPO.
We begin with the standard matrix interpretation.

\begin{proposition}
\label{prop:standard}
Consider a standard matrix interpretation:
\[
f_\AA(\seq{\vec{x}}) = \vec{a} + A_1 \vec{x}_1 + \cdots + A_n \vec{x}_n
\]
The $i$-th argument position of $f$ is weakly simple if all the diagonal
entries of $A_i$ are positive, and it is strictly simple if in addition 
the first entry of $\vec{a}$ is positive.
\end{proposition}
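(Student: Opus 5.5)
We have to show two things: that diagonal positivity of $A_i$ gives $f_\AA(\vec{x}_1,\ldots,\vec{x}_n) \geqslant \vec{x}_i$ in the component-wise order, and that a positive first entry of $\vec{a}$ upgrades this to $>$. Both follow by expanding the interpretation componentwise, using only that all entries of $\vec{a}$, of the matrices, and of the argument vectors are natural numbers.

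For weak simplicity, fix vectors $\vec{x}_1,\ldots,\vec{x}_n \in \NN^d$ and a coordinate $k \in \{1,\ldots,d\}$. The $k$-th component of $f_\AA(\vec{x}_1,\ldots,\vec{x}_n)$ is
\[
a_k + \sum_{j=1}^{n} \sum_{l=1}^{d} (A_j)_{k,l}\,(\vec{x}_j)_l .
\]
Every summand is a non-negative integer, so this is at least the single term $(A_i)_{k,k}(\vec{x}_i)_k$. Since $(A_i)_{k,k} \geqslant 1$ by assumption and $(\vec{x}_i)_k \geqslant 0$, that term is at least $(\vec{x}_i)_k$. This holds for every $k$, so $f_\AA(\ldots) \geqslant \vec{x}_i$ in the component-wise order.

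For strict simplicity, recall from the preliminaries that $>$ on vectors requires $\geqslant$ in every component together with strict inequality in the first component. The componentwise inequality is already given by the previous paragraph. In the first component, the lower bound can be sharpened by keeping the constant term as well: the sum is at least $a_1 + (A_i)_{1,1}(\vec{x}_i)_1 \geqslant a_1 + (\vec{x}_i)_1$. Since $a_1 > 0$, this exceeds $(\vec{x}_i)_1$, which gives $f_\AA(\ldots) > \vec{x}_i$.

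There is no real obstacle; the argument only uses non-negativity of all entries and the definition of the component-wise order pair. The one point that needs care is matching the strict part of that pair exactly: it demands strictness in the first coordinate, not in every coordinate, so only $a_1$ (and not the other entries of $\vec{a}$) needs to be positive.
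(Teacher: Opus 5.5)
Your proof is correct: dropping the non-negative summands, using $(A_i)_{k,k} \geqslant 1$ in every coordinate, and using $a_1 > 0$ for strictness in the first coordinate settles both claims. The paper states this proposition without proof, and your componentwise calculation is the routine verification it leaves implicit, including your correct reading that the strict order only requires strictness in the first coordinate.
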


\begin{proposition}
Consider an echelon-form matrix interpretation:
\[
f_\AA(\seq{\vec{x}}) = \vec{a} + A_1 \vec{x}_1 + \cdots + A_n \vec{x}_n
\]
The $i$-th argument position of $f$ is weakly simple if all the diagonal
entries of $A_i$ are positive, and it is strictly simple if in addition
$\vec{a}$ has a positive entry.
\end{proposition}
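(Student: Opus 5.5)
We prove the two claims separately, working directly with the lexicographic order on $\NN^d$. Write $f_\AA(\seq{\vec{x}}) = \vec{a} + A_1\vec{x}_1 + \cdots + A_n\vec{x}_n$, where each $A_j$ is an echelon-form matrix. For the $i$-th position, the matrix $A_i$ acts on a vector of the same dimension $d$, so it is square.

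\textbf{Weak simplicity.} Assume every diagonal entry of $A_i$ is positive. We want $f_\AA(\seq{\vec{x}}) \geqslant^\lex \vec{x}_i$. Since all entries of the other summands are natural numbers, we have componentwise $f_\AA(\seq{\vec{x}}) \geqslant A_i\vec{x}_i$. Componentwise $\geqslant$ implies $\geqslant^\lex$, so it suffices to show $A_i\vec{x}_i \geqslant^\lex \vec{x}_i$.

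A square echelon-form matrix is lower triangular. Its $k$-th row therefore gives
\[
(A_i\vec{x}_i)_k = \sum_{l \leqslant k} (A_i)_{k,l}\,(\vec{x}_i)_l \geqslant (A_i)_{k,k}\,(\vec{x}_i)_k \geqslant (\vec{x}_i)_k,
\]
because $(A_i)_{k,k} \geqslant 1$ and all entries are natural numbers. Hence $A_i\vec{x}_i \geqslant \vec{x}_i$ componentwise, and so also lexicographically. Putting the inequalities together and using transitivity of $\geqslant^\lex$ gives weak simplicity.

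\textbf{Strict simplicity.} Assume in addition that $\vec{a}$ has a positive entry. By the same componentwise argument,
\[
f_\AA(\seq{\vec{x}}) \geqslant \vec{a} + \vec{x}_i \quad\text{componentwise}.
\]
Let $k$ be the first index with $a_k > 0$. Then $\vec{a} + \vec{x}_i$ is at least $\vec{x}_i$ in every component and strictly larger in component $k$. A vector that is componentwise $\geqslant$ another and strictly larger somewhere is lexicographically strictly larger: at the first index where they differ, the difference must be positive. So $\vec{a} + \vec{x}_i >^\lex \vec{x}_i$. Combining this with $f_\AA(\seq{\vec{x}}) \geqslant^\lex \vec{a} + \vec{x}_i$ gives $f_\AA(\seq{\vec{x}}) >^\lex \vec{x}_i$.

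\textbf{Main point to check.} The one step that needs care is the final inference in the strict case. It relies on $({\geqslant^\lex},{>^\lex})$ being an order pair, which holds because $>^\lex$ is a strict total order and $\geqslant^\lex$ is its reflexive closure. This also explains why the condition here is weaker than in \cref{prop:standard}. In the componentwise order of a standard matrix interpretation, strictness is decided by the first component alone, so the first entry of $\vec{a}$ must be positive. In the lexicographic order, a positive entry of $\vec{a}$ in any position suffices.
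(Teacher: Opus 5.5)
Your proof is correct. The paper states this proposition without proof, and your argument is the natural one. You bound $f_\AA(\seq{\vec{x}}) \geqslant \vec{a} + A_i\vec{x}_i \geqslant \vec{a} + \vec{x}_i$ componentwise, and then use that componentwise domination (strict in some component) implies $\geqslant^\lex$ (resp.\ $>^\lex$). The strict case follows directly this way, without the order-pair step.

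Neither lower-triangularity nor the echelon form is actually used. The bound $\sum_l (A_i)_{k,l}(\vec{x}_i)_l \geqslant (A_i)_{k,k}(\vec{x}_i)_k$ holds for any matrix of natural numbers; echelon form matters only for weak monotonicity (\cref{thm:echelon-redpair}).
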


The \emph{max/plus interpretation}~\cite{YKS15} is often used 
not only as reduction pairs but also for constructing WPOs. It is an algebra whose
carrier is $\NN$ and interpretations are given by the form
\[
f_\AA(\seq{x}) = \max \{ a_0, b_1 (a_1 + x_1), \ldots, b_n (a_n + x_n) \}
\]
where $a_0, \seq{b} \in \NN$ and $\seq{a} \in \ZZ$. Its monotone/simple
positions are characterized as follows.

\begin{proposition}
\label{prop:maxplus}
Consider a max/plus interpretation $f_\AA$ of the above form.
The $i$-th argument position of $f$ is
weakly monotone for free; 
strictly monotone if $a_i \geqslant a_0$, $b_i > 0$,
and $b_j = 0$ for all other indices $j$;
weakly simple if $b_i > 0$ and $a_i \geqslant 0$;
and finally, strictly simple if $b_i > 0$ and $a_i > 0$.
\end{proposition}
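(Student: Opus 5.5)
The plan is to argue directly from the shape
\(
f_\AA(\seq{x}) = \max \{ a_0, b_1 (a_1 + x_1), \ldots, b_n (a_n + x_n) \}
\)
with the usual order on $\NN$. Monotonicity and simplicity of positions of the induced pair $({\geqslant_\AA},{>_\AA})$ then follow pointwise, because these properties are defined by quantifying over all assignments. Throughout I will keep track of signs, since $a_i \in \ZZ$ may be negative. The first observation is that $a_0 \geqslant 0$, so $f_\AA$ always takes values in $\NN$. Moreover, any argument of the $\max$ that happens to be negative never changes the value.

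\emph{Weak monotonicity.} Fix $i$ and suppose $x_i \geqslant y$. Since $b_i \geqslant 0$, we get $b_i(a_i + x_i) \geqslant b_i(a_i + y)$. All other arguments of the $\max$ are unchanged. Since $\max$ is monotone in each argument, $f_\AA(\ldots,x_i,\ldots) \geqslant f_\AA(\ldots,y,\ldots)$. No hypotheses are needed for this step.

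\emph{Strict monotonicity.} Assume $b_i > 0$, $a_i \geqslant a_0$, and $b_j = 0$ for $j \neq i$. Each term $b_j(a_j+x_j)$ with $j \neq i$ equals $0 \leqslant a_0$, so $f_\AA = \max\{a_0, b_i(a_i + x_i)\}$. I then show that the second term dominates. We have $a_i \geqslant a_0 \geqslant 0$ and $x_i \geqslant 0$, hence $a_i + x_i \geqslant 0$. Since $b_i \geqslant 1$, this gives $b_i(a_i+x_i) \geqslant a_i + x_i \geqslant a_0$. Therefore $f_\AA(\seq{x}) = b_i(a_i + x_i)$. This is strictly increasing in $x_i$ because $b_i > 0$, which gives $>$-monotonicity of position $i$.

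\emph{Simplicity.} Assume $b_i > 0$ and $a_i \geqslant 0$. Then $a_i + x_i \geqslant 0$, so $f_\AA(\seq{x}) \geqslant b_i(a_i + x_i) \geqslant a_i + x_i \geqslant x_i$, which is weak simplicity. If moreover $a_i > 0$, the last inequality becomes strict, giving $f_\AA(\seq{x}) > x_i$, which is strict simplicity.

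The only delicate point is the strict monotonicity case. There the conditions $a_i \geqslant a_0$ and $b_j = 0$ are exactly what rule out the $\max$ being attained by a constant term, which would make the function flat in $x_i$. So the main care goes into the sign bookkeeping ($a_i + x_i \geqslant 0$) that justifies $b_i(a_i+x_i) \geqslant a_i + x_i$. Everything else is a routine property of $\max$.
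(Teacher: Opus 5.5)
Your argument is correct and complete. The paper states this proposition without proof, as a routine check from the definition of $f_\AA$, and your clause-by-clause verification is that check; the sign bookkeeping $a_i + x_i \geqslant 0$, which justifies $b_i(a_i + x_i) \geqslant a_i + x_i$, is handled correctly. One minor caution: your closing remark that the hypotheses are ``exactly'' what is needed overstates things, since for instance $\max\{5, 10x\}$ is strictly monotone although $a_i = 0 < 5 = a_0$. The proposition only asserts sufficiency, so this does not affect your proof.
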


\begin{table}[t]
\caption{Supplementary experiments on 1528 termination problems.}
\label{tbl:standard2}
\centering
\begin{tabular}{%
@{}l@{~~}
r@{~~}r@{~~}r@{~~}r@{~~}r@{~~}
r@{~~}r@{~~}r@{~~}r@{}}
\toprule
& $\m{M}$
& $\m{MM}$
& $\m{WM}$
& $\m{WE_1}$
& $\m{WE_2}$
& $\m{WS_2}$
& $\m{WM*WE_2}$
& $\m{WM*WS_2}$
& $\m{WE_2*WS_2}$
\\
proved
& 526
& 499
& 546
& 490
& 558
& 583
& 648
& 677
& 606
\\
\textit{timeout}
& \textit{13}
& \textit{13}
& \textit{54}
& \textit{30}
& \textit{44}
& \textit{72}
& \textit{77}
& \textit{112}
& \textit{124}
\\
\bottomrule
\end{tabular}
\end{table}

\begin{table}[t]
\caption{Supplementary experiments on 57 relative termination problems.}
\label{tbl:relative2}
\centering
\begin{tabular}{%
@{}l@{~~}
r@{~~}r@{~~}r@{~~}r@{~~}r@{~~}
r@{~~}r@{~~}r@{~~}r@{}}
\toprule
& $\m{M}$
& $\m{MM}$
& $\m{WM}$
& $\m{WE_1}$
& $\m{WE_2}$
& $\m{WS_2}$
& $\m{WM*WE_2}$
& $\m{WM*WS_2}$
& $\m{WE_2*WS_2}$
\\
proved
& 6
& 12
& 20
& 29
& 47
& 31
& 48
& 48
& 47
\\
\textit{timeout}
& \textit{0}
& \textit{0}
& \textit{0}
& \textit{0}
& \textit{0}
& \textit{0}
& \textit{0}
& \textit{0}
& \textit{0}
\\
\bottomrule
\end{tabular}
\end{table}

Finally, we report supplementary experiments on WPO.
\cref{tbl:standard2} and \cref{tbl:relative2} summarize the results,
see \cref{sec:experiments} for the settings of the experiments.
In the tables, $\m{M}$ denotes the max/plus interpretation, $\m{W}X$ the
weighted path order whose underlying interpretation is of the class
$X$, and $A * B$ tries all lexicographic combinations $AA$, $BB$, $AB$,
and $BA$ in this order.
For example, the column of $\m{WM*WS_2}$ in \cref{tbl:standard2} indicates that 
677 problems are proven terminating if the employed reduction pairs are
lexicographic combinations of WPO induced by the max/plus
interpretation and that induced by the 2-dimensional standard matrix
interpretation, while the tool run out of time for 124 problems.

As is seen in \cref{prop:maxplus}, a max/plus interpretation ($\m{M}$)
has at most one monotone position for each function symbol, so it has
a bad compatibility with the combinability criterion. 
Indeed, $\m{M}$ and $\m{MM}$ show that
the use of lexicographic combination may result in fewer proofs.
However, if we use it with WPO, the resulting reduction pair can have
more monotone positions (cf.~\cref{prop:wpo-monotone-invariant}), which is
more suited for lexicographic combination.  Indeed, combining $\m{WM}$ 
with other reduction orders is 
quite powerful,
see~\cref{tbl:standard2}.
Theoretically, the use of WPO with a class $X$ of algebras
increases termination proving power over just using $X$. However,
in our experiments this is not always observed because WPO gives
larger SMT encodings, which in turn lead to more timeouts.

\end{document}